\newif  \iflong 
\begin{document}
  \title{Compact Post-Quantum Signatures from \\ Proofs of Knowledge leveraging Structure \\ for the $\PKP$, $\SD$ and $\RSD$ Problems} 

  \author{Loïc Bidoux$^1$, Philippe Gaborit$^2$}
  \institute{$^1$ Technology Innovation Institute, UAE \\ $^2$ University of Limoges, France}
  \date{}

\maketitle

\begin{abstract}
  The MPC-in-the-head introduced in \cite{IKOS07} has established itself as an important paradigm to design efficient digital signatures.
  For instance, it has been leveraged in the Picnic scheme \cite{picnic} that reached the third round of the NIST Post-Quantum Cryptography Standardization process.
  In addition, it has been used in \cite{Beullens20} to introduce the Proof of Knowledge (PoK) with Helper paradigm.
  This construction permits to design shorter signatures but induces a non negligible performance overhead as it uses cut-and-choose.
  In this paper, we introduce the \emph{PoK leveraging structure} paradigm along with its associated \emph{challenge space amplification} technique.
  Our new approach to design PoK brings some improvements over the PoK with Helper one.
  Indeed, we show how one can substitute the Helper in these constructions by leveraging the underlying structure of the considered problem.
  This new approach does not suffer from the performance overhead inherent to the PoK with Helper paradigm hence offers different trade-offs between security, signature sizes and performances.
  In addition, we also present four new post-quantum signature schemes.
  The first one is based on a new PoK with Helper for the Syndrome Decoding problem.
  It relies on ideas from \cite{BGKM22} and \cite{FJR21} and improve the latter using a new technique that can be seen as performing some cut-and-choose with a meet in the middle approach.
  The three other signatures are based on our new PoK leveraging structure approach and as such illustrate its versatility.
  Indeed, we provide new PoK related to the Permuted Kernel Problem ($\PKP$), Syndrome Decoding ($\SD$) problem and Rank Syndrome Decoding $(\RSD)$ problem.
  In practice, these PoK lead to comparable or shorter signatures than existing ones.
  Indeed, considering (public key + signature), we get sizes below $9$~kB for our signature related to the $\PKP$ problem, below $15$~kB for our signature related to the $\SD$ problem and below $7$~kB for our signature related to the $\RSD$ problem.
  These new constructions are particularly interesting presently as the NIST has recently announced its plan to reopen the signature track of its Post-Quantum Cryptography Standardization process.
\end{abstract}


\newpage

\section{Introduction}

Zero-Knowledge Proofs of Knowledge (PoK) are significant cryptographic primitives thanks to their various applications.
They allow a prover to convince a verifier that he knows some secret without revealing anything about it.
Designing compact PoK is an important problem as one can convert such proofs into signature schemes using the Fiat-Shamir transform \cite{FS, DPJS96} or the Unruh \cite{Unruh15} transform.
Over the years, many post-quantum signatures have been constructed following this approach using for instance the Syndrome Decoding problem \cite{Stern93}, the Multivariate Quadratic problem \cite{Sakumoto11} or the Permuted Kernel Problem \cite{PKP}.
The Picnic \cite{picnic} and MQDSS \cite{MQDSS} signature schemes that have been submitted to the NIST Post-Quantum Cryptography Standardization process also follow this approach.
More recently, Katz, Kolesnikov and Wang \cite{KKW18} proposed a protocol based on the MPC-in-the-head paradigm \cite{IKOS07} in the preprocessing model.
Two years later, Beullens generalized their work by introducing the concept of PoK with Helper \cite{Beullens20}.
The Helper is a trusted third party that can ease the design of PoK and later be removed using cut-and-choose.
Since then, the PoK with Helper paradigm have been extensively used to design post-quantum signature schemes; see for instance \cite{GPS21, FJR21, BGKM22, Wang22}.
This approach is quite interesting as it produces shorter signatures than existing ones however using a Helper along with cut-and-choose also induces a non negligible performance overhead.

In this paper, we introduce the notion of PoK leveraging structure as a new paradigm to design PoK.
Formally, our approach consists in using standard (to be read as without Helper) PoK however we believe that it is more easily understood when described by analogy to the PoK with Helper paradigm.
Indeed, our new approach can be seen as a way to remove the trusted Helper without using cut-and-choose.
In order to do so, we leverage some structure within the hard problem used to design the PoK.
Interestingly, the required additional structure has generally either been already well studied in the literature or is closely related to the considered problem.
PoK following our new framework differs from PoK with Helper ones in several ways.
These differences motivate the introduction of a new technique called challenge space amplification that is particularly well suited for our new way to design PoK.
In practice, PoK leveraging structure can lead to smaller signature schemes than PoK with Helper ones but rely on the use of some structure within the considered hard problem.

\vspace{\baselineskip}
\noindent \textbf{Contributions.}
We propose a new approach to design PoK as well as four new post-quantum signature schemes.
Our main contribution is the introduction of the \emph{Proof of Knowledge leveraging structure} paradigm along with its associated \emph{challenge space amplification} technique.
In addition, we present a new PoK with Helper for the $\SD$ problem that outperforms all existing constructions with the notable exception of the \cite{FJR22} one.
This new PoK is particularly interesting when combined with our PoK leveraging structure framework.
Moreover, we demonstrate the versatility of our new approach by designing three post-quantum signature schemes respectively related to the Permuted Kernel Problem ($\PKP$), Syndrome Decoding ($\SD$) problem and Rank Syndrome Decoding ($\RSD$) problem. 
In practice, these PoK lead to comparable or shorter signatures than existing ones for these problems.
Indeed, considering (public key + signature), we get sizes below $9$~kB for our signature related to the $\PKP$ problem, below $15$~kB for our signature related to the $\SD$ problem and below $7$~kB for our signature related to the $\RSD$ problem.
These new constructions are particularly interesting presently as the NIST has recently announced its plan to reopen the signature track of its Post-Quantum Cryptography Standardization process.

\vspace{\baselineskip}
\noindent \textbf{Paper organization.}
We provide definitions related to PoK, coding theory and hard problems in Section~\ref{sec:preliminaries}.
Our PoK leveraging structure paradigm and our amplification technique are respectively described in Sections~\ref{sec:paradigm} and~\ref{sec:amplification}.
Our PoK with Helper for the $\SD$ problem is depicted in Section~\ref{sec:pok-sd1} while our PoK leveraging structure for the $\PKP$, $\SD$ and $\RSD$ problems are presented in Section~\ref{sec:pokls}.
To finish, we provide a comparison of resulting signature schemes with respect to existing ones in Section~\ref{sec:comparison}.

\section{Preliminaries} \label{sec:preliminaries}

\vspace{0.5\baselineskip}
\noindent \textbf{Notations.} Vectors (respectively matrices) are represented using bold lower-case (respectively upper-case) letters.
For an integer $n > 0$, we use $\hperm{n}$ to denote the symmetric group of all permutations of $n$ elements.
In addition, we use $\GLFq{n}$ to denote the linear group of invertible $n \times n$ matrices in $\Fq$.
For a finite set $S$, $x \sampler S$ denotes that $x$ is sampled uniformly at random from $S$ while $x \samples{\theta}$ denotes that $x$ is sampled uniformly at random from $S$ using the seed $\theta$.
Moreover, the acronym $\ppt$ is used as an abbreviation for the term ``probabilistic polynomial time''.
A function is called negligible if for all sufficiently large $\lambda \in \mathbb{N}$, $\negl(\lambda) < \lambda^{-c}$, for all constants $c > 0$.

\subsection{Proof of Knowledge and Commitment schemes}

We start by defining Proof of Knowledge (PoK) following \cite{EPRINT:AttFehKlo21} notations.
Let $R \subseteq \bit^* \times \bit^*$ be an $\textsf{NP}$ relation, we call $(x, w) \in R$ a statement-witness pair where $x$ is the statement and $w$ is the witness.
The set of valid witnesses for $x$ is $R(x) = \{ w \,|\, (x, w) \in R \}$.
In a PoK, given a statement $x$, a prover $\prover$ aims to convice a verifier $\verifier$ that he knows a witness $w \in R(x)$.

\begin{definition}[Proof of Knowledge]
  A $(2n + 1)$-round PoK for relation R with soundness error $\epsilon$ is a two-party protocol between a prover $\prover(x, w)$ with input a statement $x$ and witness $w$ and a verifier $\verifier(x)$ with input $x$.
We denote by $\big \langle \prover(x, w), \allowbreak \verifier(x) \big \rangle$ the transcript between $\prover$ and $\verifier$. 
A PoK is correct if
\vspace{-0.25\baselineskip}
\begin{equation*}
  \prb
  \left[ \begin{array}{l}
    \accept \leftarrow \big \langle \prover(x, w), \verifier(x) \big \rangle
  \end{array} \right]
  = 1.
\end{equation*}
\end{definition}

\begin{definition} [Tree of Transcripts] \label{def:tree-of-transcripts}
  Let $k_1, \ldots, k_n \in \mathbb{N}$, a $(k_1, \ldots, \allowbreak k_n)$-tree of transcripts for a $(2n+1)$-round public coin protocol
  PoK is a set of $K = \prod_{i = 1}^{n} k_i$ transcripts arranged in a tree structure. The nodes in the tree
  represent the prover's messages and the edges between the nodes correspond to the challenges sent by the verifier. Each node at depth
  $i$ has exactly $k_i$ children corresponding to the $k_i$ pairwise distinct challenges. Every transcript is represented by exactly
  one path from the root of the tree to a leaf node. 
\end{definition}

\begin{definition} [$(k_1, \ldots, k_n)$-out-of-$(N_1, \ldots, N_n)$ Special-Soundness] \label{def:k-soundness}
  Let $k_1, \allowbreak \ldots, k_n, N_1, \ldots, N_n \in \mathbb{N}$. A $(2n + 1)$-round public-coin $PoK$,
  where $\verifier$ samples the $i$-th challenge from a set of cardinality $N_i \geq k_i$ for $i \in [n]$, is
  $(k_1, \ldots, k_n)$-out-of-$(N_1, \ldots, N_n)$ special-sound if there exists a $\ppt$ algorithm that
  on an input statement $x$ and a $(k_1, \ldots, k_n)$-tree of accepting transcripts outputs a witness $w$. We also say
  $PoK$ is $(k_1, \ldots, k_n)$-special-sound.
\end{definition}

\begin{definition}[Special Honest-Verifier Zero-Knowledge]
  A PoK satisfies the Honest-Verifier Zero-Knowledge (HZVK)  property if there exists a $\ppt$ simulator $\simulator$ that given as input a statement $x$ and random challenges $(\kappa_1, \ldots, \kappa_n)$, outputs a transcript $\big \langle \simulator(x, \kappa_1, \cdots, \kappa_n), \allowbreak \verifier(x) \big \rangle$ that is computationally indistinguishable from the probability distribution of transcripts of honest executions between a prover $\prover(x, w)$ and a verifier $\verifier(x)$.
\end{definition}

PoK with Helper introduced in \cite{Beullens20} are protocols that leverage a trusted third party (the so-called Helper) within their design.
They can be seen as 3-round PoK following an initial step performed by the Helper.
We defer the reader to \cite{Beullens20} for their formal definition.
PoK are significant cryptographic primitives as they can be turned into digital signatures using the Fiat-Shamir transform \cite{FS, DPJS96, EPRINT:AttFehKlo21}.
Hereafter, we define commitment schemes which are an important building block used to construct PoK.

\begin{definition}[Commitment Scheme]
A commitment scheme is a tuple of algorithms $(\mathsf{Com}, \mathsf{Open})$ such that $\mathsf{Com}(r, m)$ returns a commitment $c$ for the message $m$ and randomness $r$ while $\mathsf{Open}(c, r, m)$ returns either $1$ ($\mathsf{accept}$) or $0$ ($\mathsf{reject}$). 
A commitment scheme is said to be correct if:
\vspace{-0.25\baselineskip}
 \begin{equation*}
 \prb
  \left[ b = 1 \ \middle\vert \begin{array}{l}
    c \leftarrow \commit{r, m}, ~ b \leftarrow \open{c, r, m}
  \end{array}\right]
  =1.
  \end{equation*}
\end{definition}

\begin{definition}[Computationally Hiding]
Let $(m_0, m_1)$ be a pair of messages, the advantage of $\adv$ against the hiding experiment is defined as:
\begin{equation*}
  \mathsf{Adv}^{\mathsf{hiding}}_{\adv}(1^\lambda) = 
  \Bigg| \, \prb \left[ \begin{array}{l}
    b = b' \\
  \end{array} \ \middle\vert~ \\ \begin{array}{l}
    b \sampler \{0, 1\}, ~ r \sampler \bit^\lambda \\
    c \samplen \commit{r, m_b}, ~ b' \samplen \adv.\mathsf{guess}(c)
  \end{array}\right]
  - \frac{1}{2} \, \Bigg|.
\end{equation*}
A commitment scheme is computationally hiding if for all $\ppt$ adversaries $\adv$ and every pair of messages $(m_0, m_1)$, $\mathsf{Adv}^{\mathsf{hiding}}_{\adv}(1^\lambda)$ is negligible in $\lambda$.
\end{definition}

\begin{definition}[Computationally Binding]
The advantage of an adversary $\adv$ against the commitment binding experiment is defined as:
\begin{equation*}
  \mathsf{Adv}^{\mathsf{binding}}_{\adv}(1^\lambda) = \prb
  \left[ \begin{array}{l}
    m_0 \neq m_1 \\
    1 \samplen \open{c, r, m_0} \\ 
    1 \samplen \open{c, r, m_1} \\ 
  \end{array} \ \middle\vert~ \\ \begin{array}{l}
    (c, r, m_0, m_1) \samplen \adv.\mathsf{choose}(1^{\lambda})
  \end{array}\right].
\end{equation*}
A commitment scheme is computationally binding if for all $\ppt$ adversaries $\adv$, $\mathsf{Adv}^{\mathsf{binding}}_{\adv}(1^\lambda)$ is negligible in $\lambda$.
\end{definition}

\subsection{Coding theory}

We recall some definitions for both Hamming and rank metrics.
Let $n$ be a positive integer, $q$ a prime power, $m$ a positive integer, $\Fqm$ an extension of degree $m$ of $\Fq$ and $\beta := (\beta_1, \dots, \beta_m)$ a basis of $\Fqm$ over $\Fq$. 
Any vector $\bm{x} \in \Fqmn$ can be associated to the matrix $\bm{M_x} \in \Fq^{m \times n}$ by expressing its coordinates in $\beta$.

\begin{definition}[Hamming weight]
Let $\bm{x} \in \Ftn$, the Hamming weight of $\bm{x}$, denoted $\hw{\bm{x}}$, is the number of non-zero coordinates of $\bm{x}$.
\end{definition}

\begin{definition}[Rank weight]
Let $\bm{x} \in \Fqmn$, the rank weight of $\bm{x}$, denoted $\rw{\bm{x}}$, is defined as the rank of the matrix $\bm{M_x} = (x_{ij}) \in \Fq^{m \times n}$ where $x_j = \sum^m_{i = 1} x_{i,j} \beta_i$. 
\end{definition}

\begin{definition}[Support]
Let $\bm{x} \in \Fqmn$, the support of $\bm{x}$ denoted $\rsup{\bm{x}}$, is the $\Fq$-linear space generated by the coordinates of $\bm{x}$ namely $\rsup{\bm{x}} = \langle x_1, \dots, x_n \rangle_{\Fq}$.
It follows from the definition that $\rw{\bm{x}} = | \rsup{\bm{x}} |$.
\end{definition}

We define linear codes over a finite field $\mathbb{F}$ where $\mathbb{F} = \Ft$ in Hamming metric and $\mathbb{F} = \Fqm$ in rank metric as well as quasi-cyclic codes and ideal codes.
We restrict our definitions to codes of index 2 as they are the ones used hereafter.

\begin{definition}[$\mathbb{F}$-linear code]
An $\mathbb{F}$-linear code $\mathcal{C}$ of length $n$ and dimension $k$ denoted $[n, k]$ is an $\mathbb{F}$-linear subspace of $\mathbb{F}^n$ of dimension $k$.
A generator matrix for $\mathcal{C}$ is a matrix $\bm{G} \in \mathbb{F}^{k \times n}$ such that $\mathcal{C} = \left\lbrace \bm{x} \bm{G}, ~ \bm{m} \in \mathbb{F}^k \right\rbrace$.
  A parity-check matrix for $\mathcal{C}$ is a matrix $\bm{H} \in \mathbb{F}^{(n-k) \times n}$ such that $\mathcal{C} = \left\lbrace \bm{x} \in \mathbb{F}^n, ~ \bm{H} \bm{x}^{\top} = 0 \right\rbrace$.
\end{definition}

\begin{definition}[Quasi-cyclic code]
A systematic binary quasi-cyclic code of index 2 is a $[n = 2k, k]$ code that can be represented by a generator matrix $\bm{G} \in \Ft^{k \times n}$ of the form $\mathbf{G} = [ \mathbf{I}_k ~ \mathbf{A}]$ where $\mathbf{A}$ is a circulant $k\times k$ matrix. 
Alternatively, it can be represented by a parity-check matrix $\bm{H} \in \Ft^{(n - k) \times k}$ of the form $\mathbf{H} = [\mathbf{I}_k ~ \mathbf{B}]$ where $\mathbf{B}$ is a circulant $k \times k$ matrix.
\end{definition}

\begin{definition}[Ideal matrix]
Let $P \in \Fq[X]$ a polynomial of degree $k$ and let $\bm{v} \in \Fqm^k$.
The ideal matrix $\mathcal{IM}_P(\bm{v}) \in \Fqm^{k \times k}$ is of the form
\begin{equation*}
	\mathcal{IM}_{P}(\mathbf{v}) = \begin{pmatrix}
	 	\hspace{32pt} \bm{v} \mod P \\
	 	\hspace{16pt} X \cdot \bm{v} \mod P \\
	 	\vdots \\
	 	X^{k-1} \cdot \bm{v} \mod P
	 \end{pmatrix}.
\end{equation*}
\end{definition}

\begin{definition}[Ideal code]
Let $P \in \Fq[X]$ a polynomial of degree $k$, $\bm{g} \in \Fqm^k$ and $\bm{h} \in \Fqm^k$.
An ideal code of index 2 is a $[n = 2k, k]$ code $\mathcal{C}$ that can be represented by a generator matrix $\bm{G} \in \Fqm^{k \times n}$ of the form $\mathbf{G} = [ \mathbf{I}_k ~ \mathcal{IM}_{P}(\bm{g})]$.
Alternatively, it can be represented by a parity-check matrix $\bm{H} \in \Fqm^{(n - k) \times k}$ of the form $\mathbf{H} = [\mathbf{I}_k ~ \mathcal{IM}_{P}(\bm{h})]$.
\end{definition}

Let $\bm{a} = (a_1, \ldots, a_{k}) \in \Ftk$, for $r \in \intoneto{k - 1}$, we define the $\rot()$ operator as $\rot_r(\bm{a}) = (a_{k - r + 1}, \ldots, a_{k - r})$.
For $\bm{b} = (\bm{b}_1, \bm{b}_2) \in \Ft^{2k}$, we slightly abuse notations and define $\rot_r(\bm{b}) = (\rot_r(\bm{b}_1), \rot_r(\bm{b}_2))$.
Whenever $\bm{H}$ is the parity-check matrix of a quasi-cyclic code or an ideal code, if one has $\bm{H} \bm{x}^{\top} = \bm{y}^{\top}$, then it holds that $\bm{H} \cdot \rot_r(\bm{x})^{\top} = \rot_r(\bm{y})^{\top}$.

\subsection{Hard problems}

We introduce several hard problems along with some of their variants.
They are used to design signature schemes in the remaining of this paper.

\vspace{0.25\baselineskip}
\begin{definition}[$\PKP$ problem] \label{def:pkp}
  Let $(q, m, n)$ be positive integers, $\bm{H} \in \Fq^{m \times n}$ be a random matrix, $\pi \in \hperm{n}$ be a random permutation and $\bm{x} \in \Fq^{n}$ be a vector such that $\bm{H}(\pi[\bm{x}]) = 0$.
  Given $(\bm{H}, \bm{x})$, the Permuted Kernel Problem $\PKP(q, m, \allowbreak n)$ asks to find a permutation $\pi$.
\end{definition}

\vspace{0.25\baselineskip}
\begin{definition}[$\IPKP$ problem] \label{def:ipkp}
  Let $(q, m, n)$ be positive integers, $\bm{H} \sampler \Fq^{m \times n}$ be a random matrix, $\pi \in \hperm{n}$ be a random permutation, $\bm{x} \in \Fq^{n}$ be a random vector and $\bm{y} \in \Fq^{m}$ be a vector such that $\bm{H}(\pi[\bm{x}]) = \bm{y}$.
  Given $(\bm{H}, \bm{x}, \bm{y})$, the Inhomogeneous Permuted Kernel Problem $\IPKP(q, m, n)$ asks to find $\pi$.
\end{definition}

\vspace{0.25\baselineskip}
\begin{definition}[$\SD$ problem] \label{def:sd}
  Let $(n, k, w)$ be positive integers, $\bm{H} \in \Ft^{\nmktn}$ be a random parity-check matrix, $\bm{x} \in \Ft^n$ be a random vector such that $\hw{\bm{x}} = w$ and $\bm{y} \in \Ft^{(\nmk)}$ be a vector such that $\bm{Hx}^\top = \bm{y}^\top$.
  Given $(\bm{H}, \bm{y})$, the binary Syndrome Decoding problem $\SD(n, k, w)$ asks to find $\bm{x}$.
\end{definition}

\vspace{0.25\baselineskip}
\begin{definition}[$\QCSD$ problem] \label{def:qcsd}
  Let $(n=2k, k, w)$ be positive integers, $\bm{H} \in \mathcal{QC}(\Ft^{(n - k) \times n})$ be a random parity-check matrix of a quasi-cyclic code of index~$2$, $\bm{x} \in \Ft^{n}$ be a random vector such that $\hw{\bm{x}} = w$ and $\bm{y} \in \Ft^{(n-k)}$ be a vector such that $\bm{Hx}^\top = \bm{y}^\top$.
  Given $(\bm{H}, \bm{y})$, the binary Quasi-Cyclic Syndrome Decoding problem $\QCSD(n, k, w)$ asks to find $\bm{x}$.
\end{definition}

\vspace{0.25\baselineskip}
\noindent \textbf{Decoding One Out of Many setting.} We denote by $\QCSD(n, k, w, M)$ the $\QCSD$ problem in the decoding one out of many setting \cite{Sen11}.
In this setting, several small weight vectors $(\bm{x}_i)_{i \in \intoneto{M}}$ are used along with several syndromes $(\bm{y}_i)_{i \in \intoneto{M}}$ and one is asked to find any $\bm{x}_i$.
This setting is natural for the $\QCSD$ problem as one can get additional syndromes using rotations.


\vspace{0.5\baselineskip}
\begin{definition}[$\RSD$ problem] \label{def:rsd}
  Let $(q, m, n, k, w)$ be positive integers, $\bm{H} \in \Fqm^{\nmktn}$ be a random parity-check matrix, $\bm{x} \in \Fqm^n$ be a random vector such that $\rw{\bm{x}} = w$ and $\bm{y} \in \Fqm^{(\nmk)}$ be a vector such that $\bm{Hx}^\top = \bm{y}^\top$.
  Given $(\bm{H}, \bm{y})$, the Rank Syndrome Decoding problem $\RSD(q, m, n, k, w)$ asks to find~$\bm{x}$.
\end{definition}

\vspace{0.25\baselineskip}
\begin{definition}[$\IRSD$ problem] \label{def:irsd}
  Let $(q, m, n = 2k, k, w)$ be positive integers, $P \in \Fq[X]$ be an irreducible polynomial of degree $k$, $\bm{H} \in \mathcal{ID}(\Fqm^{(n - k) \times n})$ be a random parity-check matrix of an ideal code of index $2$, $\bm{x} \in \Fqm^{n}$ be a random vector such that $\rw{\bm{x}} = w$ and $\bm{y} \in \Fqm^{(n - k)}$ be a vector such that $\bm{Hx}^\top = \bm{y}^\top$.
  Given $(\bm{H}, \bm{y})$, the Ideal Rank Syndrome Decoding problem $\IRSD(q, m, n, k, w)$ asks to find~$\bm{x}$.
\end{definition}

\vspace{0.25\baselineskip}
\begin{definition}[$\RSL$ problem] \label{def:rsl}
  Let $(q, m, n, k, w, M)$ be positive integers, $\bm{H} \in \Fqm^{\nmktn}$ be a random parity-check matrix, $E$ be a random subspace of $\Fqm$ of dimension $\omega$, $(\bm{x}_i)_{i \in \intoneto{M}} \in (\Fqm^n)^M$ be random vectors such that $\rsup{\bm{x}_i} = E$ and $(\bm{y}_i) \in (\Fqm^{(\nmk)})^M$ be vectors such that $\bm{H} \bm{x}_i^\top = \bm{y}_i^\top$.
  Given $(\bm{H}, (\bm{y}_i)_{i \in \intoneto{M}})$, the Rank Support Learning problem $\RSL(q, m, n, k, w, M)$ asks to find~E.
\end{definition}

\vspace{0.25\baselineskip}
\begin{definition}[$\IRSL$ problem] \label{def:irsl}
  Let $(q, m, n =2k, k, w, M)$ be positive integers, $P \in \Fq[X]$ be an irreducible polynomial of degree $k$, $\bm{H} \in \mathcal{ID}(\Fqm^{\nmktn})$ be a random parity-check matrix of an ideal code of index $2$, $E$ be a random subspace of $\Fqm$ of dimension $\omega$, $(\bm{x}_i)_{i \in \intoneto{M}} \in (\Fqm^n)^M$ be random vectors such that $\rsup{\bm{x}_i} = E$ and $(\bm{y}_i)_{i \in \intoneto{M}} \in (\Fqm^{(\nmk)})^M$ be vectors such that $\bm{H} \bm{x}_i^\top = \bm{y}_i^\top$.
  Given $(\bm{H}, (\bm{y}_i)_{i \in \intoneto{M}})$, the Ideal Rank Support Learning problem $\IRSL(q, m, n, k, \allowbreak w, M)$ asks to find~E.
\end{definition}

\section{New paradigm to design PoK leveraging structure} \label{sec:paradigm}

\subsection{Overview of our PoK leveraging structure approach}

The PoK with Helper paradigm introduced in \cite{Beullens20} eases the design of PoK and has historically led to shorter signatures.
It relies on introducing a trusted third party (the so called Helper) that is later removed using cut-and-choose.
In this section, we introduce a new paradigm that can be seen as an alternate way to remove the Helper in such PoK.
Formally, our approach consists in using standard (to be read as without Helper) PoK however we believe that it is more easily understood when described by analogy to the PoK with Helper paradigm.

\vspace{\baselineskip}
\noindent \textbf{PoK leveraging structure.} Our main idea can be seen as using the underlying structure of the considered problem in order to substitute the Helper in the PoK with Helper approach.
As a consequence, all the PoK designed following our paradigm share the framework described in Figure~\ref{fig:paradigm}.
Indeed, all our PoK leveraging structure are 5-round PoK whose first challenge space $\mathcal{C}_{\mathsf{struct}}$ is related to some structure and whose second challenge space can be made arbitrarily large.
This specific framework impacts the properties of the considered protocols which can be leveraged to bring an improvement over existing constructions.
Indeed, removing the Helper provides an improvement on performances that can lead to smaller signature sizes when some additional conditions are verified.
We describe in Table~\ref{table:paradigm} how our new paradigm can be applied to the PoK with Helper from \cite{Beullens20} and Section~\ref{sec:pok-sd1} in order to design new PoK leveraging structure related to the $\PKP$, $\SD$ and $\RSD$ problems.
For instance, starting from the $\SD$ problem over $\Ft$, one may introduce the required extra structure by considering the $\QCSD$ problem over $\Ft$ or the $\SD$ problem over $\Fq$.

\begin{figure}[!h] 
  \begin{center}
    \resizebox{1\textwidth}{!}{\fbox{
      \pseudocode{%
        \hspace{160pt} \> \> \hspace{160pt} \\[-0.75\baselineskip] 
        \underline{\proverf(w, x)} \> \> \underline{\verifierf(x)} \\[0.5\baselineskip]
        (\com_1, \statepone) \samplen \prover_1(w, x) \\[-0.75\baselineskip]
        \> \sendmessageright*{\com_1} \\[-0.25\baselineskip]
        \> \> \chs \sampler \chsps\\[-0.75\baselineskip]
        \> \sendmessageleft*{\chs} \\[0.25\baselineskip]
        (\com_2, \stateptwo) \samplen \prover_2(w, x, \chs, \statepone) \hspace{10pt }\\[-0.75\baselineskip]
        \> \sendmessageright*{\com_2} \\[-0.25\baselineskip]
        \> \> \alpha \sampler \intoneto{N} \\[-0.75\baselineskip]
        \> \sendmessageleft*{\alpha} \\[0.25\baselineskip]
        \rsp \samplen \prover_3(w, x, \chs, \alpha, \stateptwo) \\[-0.75\baselineskip]
        \> \sendmessageright*{\rsp} \\[-0.5\baselineskip]
        \> \> \pcreturn \verifier(x, \com_1, \chs, \com_2, \alpha, \rsp) \\[-0.75\baselineskip]
      }
    }}
    \caption{Overview of PoK leveraging structure} \label{fig:paradigm}
  \end{center}
\vspace{-\baselineskip}
\end{figure}

\begin{table}[ht]
\begin{center}
{\setlength{\tabcolsep}{0.35em}
{\renewcommand{\arraystretch}{1.6}
{\scriptsize
  \begin{tabular}{|l|l|l|l|}
    \hline
    \multicolumn{1}{|c}{Scheme} & \multicolumn{1}{|c}{First Challenge} & \multicolumn{1}{|c}{Second Challenge} & \multicolumn{1}{|c|}{Problem} \\ \hline \hline

    SUSHYFISH \cite{Beullens20}                              & Helper                 & Linearity over $\Fq$ & $\IPKP$ \\ \hline
    Section \ref{sec:pok-pkp2}                               & Linearity over $\Fq$   & Shared Permutation   & $\IPKP$ \\ \hline \hline

    \cite{FJR21}, \cite{BGKM22}                             & Helper                 & (Shared Permutation) & $\SD$ over $\Ft$ \\ \hline
    Section \ref{sec:pok-sd1}                                & Helper                 & Shared Permutation & $\SD$ over $\Ft$ \\ \hline
    Section \ref{sec:pok-sd2}                                & Cyclicity              & Shared Permutation   & $\QCSD$ over $\Ft$ \\ \hline \hline

    Section \ref{sec:pok-sd1}                                & Helper                 & Shared Permutation   & $\RSD$ \\ \hline
    \multirow{2}{*}{Section \ref{sec:pok-rsd2}}              & Cyclicity              & Shared Permutation   & $\IRSD$ \\ \cline{2-4}
                                                             & Cyclicity, Linearity, Same Support & Shared Permutation & $\IRSL$ \\ \hline


  \end{tabular}
  \vspace{0.5\baselineskip}
  \caption{Substitution of the Helper using PoK leveraging structure} \label{table:paradigm}
}}}
\end{center}
\end{table}
\vspace{-2\baselineskip}

\subsection{Properties of PoK leveraging structure}

Hereafter, we discuss the advantages and limits of our new approach by des-cribing its impact on soundness, performance, security and size of the resulting signature scheme.
We denote by $\cC_1$ and $\cC_2$ the sizes of the two considered challenge spaces and by $\tau$ the number of iterations that one need to execute to achieve a negligible soundness error for some security parameter $\lambda$.

\vspace{\baselineskip}
\noindent \textbf{Impact on soundness error.}
Removing the trusted setup induced by the Helper allows new cheating strategies for malicious provers.
In most cases, this means that the soundness error is increased from $\max(\frac{1}{\cC_1}, \frac{1}{\cC_2})$ for protocols using a Helper to $\frac{1}{\cC_1} + (1 - \frac{1}{\cC_1}) \cdot \frac{1}{\cC_2}$ for protocols without Helper.
A malicious prover has generally a cheating strategy related to the first challenge that is sucessful with probability $\frac{1}{\cC_1}$ as well as a cheating strategy related to the second challenge for the remaining cases that is sucessful with probability $(1 - \frac{1}{\cC_1}) \cdot \frac{1}{\cC_2}$. 
The aforementioned expression is equal to $\frac{1}{\cC_2} + \frac{\cC_2 - 1}{\cC_1 \cdot \cC_2}$ hence one can see that for challenge spaces such that $\cC_2 \gg 1$, the resulting soundness error is close to $\frac{1}{\cC_2} + \frac{1}{\cC_1}$.
In addition, if one has also $\cC_1 \gg \cC_2$, the soundness error is close to $\frac{1}{\cC_2}$.

\vspace{\baselineskip}
\noindent \textbf{Impact on performances.}
Using a Helper, one has to repeat several operations (some sampling and a commitment in most cases) $\tau \cdot \cC_1 \cdot \cC_2$ times during the trusted setup phase where $\cC_1$ is the number of repetitions involved in the cut and choose used to remove the Helper.
While in practice the \emph{beating parallel repetition} optimization from \cite{KKW18} allows to reduce the number of operations to $X \cdot \cC_2$ where $X \leq \tau \cdot \cC_1$, the trusted setup phase still induces an important performance overhead.
Removing the Helper generally reduces the cost of this phase to $\tau \cdot \cC_2$ operations thus improving performances as $\tau \leq X$. 
One should note that the PoK constructed from our new technique inherently have a 5-round structure.
A common technique consists to collapse 5-round PoK to 3-round however doing so would re-introduce the aforementioned performance overhead in our case.

\vspace{\baselineskip}
\noindent \textbf{Impact on security.}
PoK following our new paradigm are slightly less conservative than PoK with Helper collapsed to 3-round. 
Indeed, the underlying 5-round structure can be exploited as demonstrated by the attack from \cite{KZ20}.
In practice, one can increase the number of iterations $\tau$ to take into account this attack.
One can also amplify the challenge spaces sizes in order to reduce the effectiveness of this attack as explained in Section~\ref{sec:amplification}.
In addition, the security proofs of the PoK following our new paradigm might be a bit more involved.
Indeed, one might need to introduce an intermediary problem and rely on a reduction from the targeted problem to this intermediary problem.
This strategy was used in \cite{AGS11} with the introduction of the $\DiffSD$ problem and its reduction to the $\QCSD$ problem.
More recently, such a strategy has also been used in \cite{FJR22} where the d-split $\SD$ problem is used as an intermediary problem along with a reduction to the $\SD$ problem.
In practice, one generally need to increase the considered parameters for these reductions to hold securely.

\vspace{\baselineskip}
\noindent \textbf{Impact on resulting signature size.}
Several of the aforementioned elements impact the signature size in conflicting ways.
For instance, removing the Helper increases the soundness which impacts the signature size negatively however it also reduces the number of seeds to be sent which impacts the signature size positively.
In addition, many PoK with Helper feature a trade-off between performances and signature sizes hence our performance improvement related to the Helper's removal directly translate into a reduction of the signature size in these cases.
Moreover, using a 5-round structure reduces the number of commitments to be sent but requires to take into account the attack from \cite{KZ20}.
In practice, our new paradigm lead to smaller signature sizes over comparable PoK with Helper when the parameters are chosen carefuly.
This is mainly due to the performance improvement related to the Helper's removal while the work from \cite{KZ20} constitutes a limiting factor thus motivating the new technique introduced in Section~\ref{sec:amplification}.

\vspace{\baselineskip}
Our new paradigm exploits the structure of the considered problems in order to build more compact PoK.
As a result, our protocols differs from existing ones in several ways thus leading to different features when it comes to soundness, performances, security and resulting signature sizes.
Interestingly, the required additional structure has been either already well studied (such as cyclicity for the $\QCSD$ and $\IRSD$ problems) or is closely related to  the considered problem (such as linearity over $\Fq$ for the inhomegeneous $\IPKP$ problem).  

\section{Amplifying challenge space to mitigate \cite{KZ20} attack} \label{sec:amplification}

\subsection{Overview of our challenge space amplification technique}

It was shown in \cite{KZ20} that 5-round PoK that use parallel repetition to achieve a negligible soundness error are vulnerable to an attack when they are made non-interactive with the Fiat-Shamir transform.
One can easily mitigate this attack by increasing the number of considered repetitions $\tau$.
This countermeasure increases the resulting signature size hence most designers choose to collapse 5-round schemes into 3-round ones instead.
In the case of our new paradigm, such a strategy is not desirable as explained in Section~\ref{sec:paradigm}.
These considerations motivate the research for an alternate way to mitigate the \cite{KZ20} attack.

\vspace{\baselineskip}
\noindent \textbf{Challenge space amplification.} Our new mitigation strategy consists in amplifying the challenge space sizes rather than increasing the number of repetitions.
In particular, we are interested in amplifying the first challenge space of our PoK leveraging structure.
Hereafter, we show how such an amplification can be performed by increasing the size of the public key.
Interestingly, in some cases, one may increase the size of the challenge space exponentially while increasing the size of the public key only linearly.
In such cases, our new mitigation strategy is more efficient than increasing the number of repetitions $\tau$.

\subsection{The \cite{KZ20} attack against 5-round PoK}

We start by recalling that most PoK from the litterature feature a soundness error ranging from $2/3$ to $1/N$ for $N$ up to $1024$.
In order to achieve a negligible soundness error, one has to execute $\tau$ repetitions of the underlying protocol. 
The main idea of the \cite{KZ20} result is to separate the attacker work in two steps respectively related to the first and second challenges.
A malicious prover tries to correctly guess the first challenges for $\tau^* \leq \tau$ repetitions as well as the second challenges for the remaining $\tau - \tau^*$ repetitions leveraging the fact that a correct guess for any challenge allows to cheat in most protocols.
The efficiency of the attack depends on the number of repetitions $\tau$, the sizes of the challenge spaces $\cC_1$ and $\cC_2$ as well as a property of the PoK called the capability for early abort.
PoK with Helper have the capability for early abort however schemes constructed following our new paradigm don't.
As a consequence, finding a way to mitigate the \cite{KZ20} attack is of great importance in our context.
Our mitigation strategy relies on the fact that the attack complexity increases if the two challenges spaces $\cC_1$ and $\cC_2$ are not of equal size.
Indeed, in our context (no capability for early abort), the attack complexity is equal to
$ \mathsf{Cost}(\tau) = \frac{1}{P_1(\tau^*, \tau, \cC_1)} + (\cC_2)^{\tau - \tau^*}$
 where $P_1(r, \tau, \cC_1)$ is the probability to correctly guess at least $r$ amongst $\tau$ challenges in a challenge space of size $\cC_1$ namely
$ P_1(r, \tau, \cC_1) = \sum\nolimits_{i = r}^{\tau} \Big( \frac{1}{\cC_1} \Big)^i \cdot \Big( \frac{\cC_1 - 1}{\cC_1} \Big)^{\tau - i} \cdot {\tau \choose i}$
 with $\tau^*$ being the number of repetitions in the first step of the attack that minimizes the attack cost namely $\tau^* = \underset{0 \leq r \leq \tau}{\arg \min} ~  \frac{1}{P_1(r, \tau, \cC_1)} + (\cC_2)^{\tau - r}$.

\subsection{Trade-off between public key size and challenge space size}

In order to mitigate the attack from \cite{KZ20}, we propose to increase the size of the first challenge space.
This decreases the probability that a malicious prover is able to correctly guess $\tau^*$ random challenges in the first step of the attack hence reduces its efficiency.
In order to do so, one can introduce several instances of the considered hard problem in a given public key as suggested in \cite{ISIT21}.
Using the $\SD$ problem for illustrative purposes, one replaces the secret key $\sk = (\bm{x})$ where $\hw{\bm{x}} = \omega$ and public key $\pk = (\bm{H}, \bm{y}^\top = \bm{H} \bm{x}^{\top})$ by $M$ instances such that $\sk = (\bm{x}_i)_{i \in \intoneto{M}}$ and $\pk = (\bm{H}, (\bm{y}^{\top}_i = \bm{H} \bm{x}^{\top}_i)_{i \in \intoneto{M}})$ where $\hw{\bm{x}_i} = \omega$.
In practice, this implies that the parameters of the schemes have to be chosen taking into account that the attacker has now access to $M$ instances $(\bm{H}, (\bm{y}_i)_{i \in \intoneto{M}})$.
This setting has been studied within the code-based cryptography community and is commonly referred as decoding one out of many in the literature \cite{Sen11}.
Using this technique, the first challenge space size is increased by a factor $M$ however the size of the public key is also increased by a factor $M$.
Overall, the challenge space amplification approach reduces the (public key + signature) size as suggested by numbers in Section~\ref{sec:comparison}. 

\vspace{\baselineskip}
We now present a new idea that greatly improves the efficiency of the aforementioned amplification technique.
Interestingly, in some cases, it is possible to increase the challenge space size \emph{exponentially} while only increasing the public key size \emph{linearly}.
We illustrate such cases using the $\RSD$ problem namely the syndrome decoding problem in the rank metric setting.
As previously, let us consider $M$ instances such that $\sk = (\bm{x}_i)_{i \in \intoneto{M}}$ and $\pk = (\bm{H}, (\bm{y}^{\top}_i = \bm{H} \bm{x}^{\top}_i)_{i \in \intoneto{M}})$ where $\rw{\bm{x}_i} = \omega$.
Moreover, if we choose the $\bm{x}_i$ such that they all share the same support $E$, then we end up working with the $\RSL$ problem.
The $\RSL$ problem can be seen as a generalization of the $\RSD$ problem whose security has been studied in \cite{GHPT17, DT18, BB21}.
Due to the properties of the rank metric, any linear combination of $\bm{x}_i$ has a support $E' \subseteq E$ (with $E' = E$ with good probability) and a rank weight equal to $\omega' \leq \omega$ (with $\omega' = \omega$ with good probability).
One can leverage this property to design more efficient protocols by substituting a random choice of a $\bm{x}_i$ value by a random choice of a linear combination of the $\bm{x}_i$ values.
Doing so, a random choice amongst $M$ values is substitued by a random choice amongst $q^M$ values thus amplifying the considered challenge space size exponentially rather than linearly while still requiring the public key size to be increased only linearly by the factor $M$.
This makes our amplification technique quite efficient as a mitigation strategy against the \cite{KZ20} attack as suggested by the numbers presented in Section~\ref{sec:param-rsd}.
Indeed, one can see that amplifying the first challenge space size exponentially reduces the number of repetitions $\tau$ from $37$ to $23$ namely an improvement approximately greater than $35\%$.


\section{New PoK with Helper for the $\SD$ problem} \label{sec:pok-sd1}

The first efficient PoK for the $\SD$ problem was introduced by Stern in \cite{Stern93}.
Over the years, many variants and improvements have been proposed in \cite{Veron97, AGS11, CVE11} for instance.
Several PoK achieving an arbitrarily small soundness error have been proposed recently in \cite{GPS21, FJR21, BGKM22, FJR22}.
Most of these schemes (with the notable exception of \cite{FJR22}) relies on permutations to mask the secret value $\bm{x}$ associated to the considered $\SD$ problem.
Hereafter, we present a new PoK with Helper that outperforms previous permutation-based schemes.
It is described in Figure~\ref{fig:pok4-h} and encompasses several ideas from \cite{BGKM22}, \cite{FJR21} and \cite{BGKS21}.
Indeed, our protocol follows the same paradigm as PoK~2 from \cite{BGKM22} as it introduces several permutations $(\pi_i)_{i \in \intoneto{N}}$ and masks the secret $\bm{x}$ using the permutation $\pi_{\alpha}$ for $\alpha \in \intoneto{N}$ while revealing the other permutations $(\pi_i)_{i \in \intoneto{N} \setminus \alpha}$.
A notable difference comes from the fact that it also leverages the shared permutations paradigm from \cite{FJR21} where the permutations $(\pi_i)_{i \in \intoneto{N}}$ are nested around a global permutation $\pi$ such that $\pi = \pi_{N} \circ \cdots \circ \pi_1$.
As a consequence, our protocol masks the secret $\bm{x}$ using $\pi_{\alpha} \circ \cdots \circ \pi_1[\bm{x}]$ for $\alpha \in \intoneto{N}$.
This differs from \cite{FJR21} where the secret is masked by $\pi[\bm{x}]$ and where $\pi[\bm{u} + \bm{x}] + \bm{v}$ is computed without revealing $\pi$.
This difference allows us to perform a \emph{cut-and-choose with a meet in the middle approach} where the recurrence relation related to $\pi$ is used in both directions rather than just one.
Thanks to this new property, our protocol can benefit from an optimization introduced in \cite{BGKS21} that allows to substitute a vector by a random seed.
This improvement is of great importance as it implies that our PoK size scales with $1.5 \, \tau \, n$ (where $\tau$ is the number of repetitions required to achieve a negligible soundness error and $n$ is the vector length considered within the $\SD$ instance) while all previous protocols feature sizes scaling with a factor $2 \, \tau \, n$ instead.
As a consequence, our protocol outperforms the protocols from \cite{BGKM22} and \cite{FJR21}.
We defer the reader to Tables~\ref{table:param1} and~\ref{table:sd-ft} for a comparison with existing protocols including the recent \cite{FJR22} proposal.

\begin{center}
  \resizebox{1\textwidth}{!}{\pseudocode{%
    \hspace{440pt} \\[-\baselineskip][\hline]\\[-8pt]
    \underline{\pcalgostyle{Inputs~\&~Public~Data}} \\
    w = \bm{x}, ~ x = (\bm{H}, \bm{y}) \\[\baselineskip]
    \underline{\helperf(x)} \\[0.1\baselineskip]
    \theta \sampler \bit^{\lambda}, ~ \xi \sampler \bit^{\lambda} \\
    \pcfor i \in \intoneto{N} \pcdo \\
    \pcind \theta_{i} \samples{\theta} \bit^{\lambda}, ~ \phi_{i} \samples{\theta_i} \bit^{\lambda}, 
    ~ \pi_i \samples{\phi_i} \hperm{n}, ~ \bm{v}_i \samples{\phi_i} \Ftn, 
    ~ r_{1, i} \samples{\theta_i} \bit^{\lambda}, ~ \com_{1, i} = \commith{r_{1,i}, \, \phi_{i}} \\
    \pcend \\
    \pi = \pi_N \circ \cdots \circ \pi_1,
    ~ \bm{v} = \bm{v}_N + \sum\nolimits_{i \in \intoneto{N - 1}} \pi_N \circ \cdots \circ \pi_{i + 1}[\bm{v}_i],
    ~ \bm{r} \samples{\xi} \Ftn, ~ \bm{u} = \pi^{-1}[\bm{r} - \bm{v}] \\
    \com_1 = \commitb{\bm{H} \bm{u} \, || \, \pi[\bm{u}]+ \bm{v} \, || \, (\com_{1,i})_{i \in \intoneto{N}}} \\[0.5\baselineskip]
    \tx{Send } (\theta, \xi) \tx{ to the } \proverf \tx{ and } \com_1 \tx{ to the } \verifierf
    ~\\[3pt][\hline]\\
    \underline{\prover_1(w, x, \theta, \xi)} \\
    \tx{Compute } (\theta_i, \pi_i, \bm{v}_i)_{i \in \intoneto{N}} \tx{ and } \bm{u} \tx{ from } (\theta, \xi) \\
    \bm{s}_0 = \bm{u} + \bm{x} \\
    \pcfor i \in [1, N] \pcdo \\
    \pcind \bm{s}_i = \pi_i[\bm{s}_{i - 1}] + \bm{v}_i \\ 
    \pcend \\
    \com_2 = \commitb{\bm{u} + \bm{x} \, || \, (\bm{s}_i)_{i \in \intoneto{N}}} \\[\baselineskip]
    \underline{\verifier_1(x, \com_1)} \\
    \alpha \sampler \intoneto{N} \\[\baselineskip]
    \underline{\prover_2(w, x, \theta, \xi, \alpha)} \\
    \bm{z}_1 = \bm{u} + \bm{x}, ~ z_2 = (\theta_{i})_{i \in \intoneto{N} \setminus \alpha},
    ~ z_3 = \xi, ~ \bm{z}_4 = \pi_{\alpha} \circ \dots \circ \pi_1[\bm{x}] \\
    \rsp = (\bm{z}_1, z_2, z_3, \bm{z}_4, \com_{1, \alpha}) \\[\baselineskip]
    \underline{\verifier_2(x, \com_1, \com_2, \alpha, \rsp)} \\
    \tx{Compute } (\bar{\phi_i}, \bar{r}_{1,i}, \bar{\pi}_i, \bar{\bm{v}}_i)_{i \in \intoneto{N} \setminus \alpha} \tx{ from } z_2
    \tx{ and } \bar{\bm{t}}_N = \bar{\bm{r}} \tx{ from } z_3 \\
    \pcfor i \in \{N, \ldots, \alpha + 1\} \pcdo \\
    \pcind \bar{\bm{t}}_{i-1} = \bar{\pi}^{-1}_i[\bar{\bm{t}}_i - \bar{\bm{v}}_i] \\
    \pcend \\
    \bar{\bm{s}}_0 = \bm{z}_1, ~ \bar{\bm{s}}_{\alpha} = \bar{\bm{t}}_\alpha + \bm{z}_4, ~ \bar{\com}_{1, \alpha} = \com_{1, \alpha} \\
    \pcfor i \in [1, N] \setminus \alpha \pcdo \\
    \pcind \bar{\bm{s}}_i = \bar{\pi}_i[\bar{\bm{s}}_{i - 1}] + \bar{\bm{v}}_i \\ 
    \pcind \bar{\com}_{1,i} = \commith{\bar{r}_{1,i}, \, \bar{\phi_i}} \\
    \pcend \\
    b_1 \samplen \big( \com_{1} = \commitb{\bm{H} \bm{z}_1 - \bm{y} \, || \, \bar{\bm{r}} \, || \, (\bar{\com}_{1,i})_{i \in \intoneto{N}}} \big) \\
    b_2 \samplen \big( \com_2 = \commitb{\bm{z}_1 \, || \, (\bar{\bm{s}}_i)_{i \in \intoneto{N}}} \big) \\ 
    b_3 \samplen \big( \hw{\bm{z}_4} = \omega \big) \\
    \pcreturn b_1 \wedge b_2 \wedge b_3
    ~\\[3pt][\hline]\\[-2\baselineskip]
  }}

  \captionof{figure}{PoK with Helper for the $\SD$ problem over $\Ft$ \label{fig:pok4-h}}
\end{center}

\newpage

\begin{theorem} \label{thm:pok-sd1}
If the hash function used is collision-resistant and if the commitment scheme used is binding and hiding, then the protocol depicted in Figure~\ref{fig:pok4-h} is an honest-verifier zero-knowledge PoK with Helper for the $\SD$ problem over $\Ft$ with soundness error $1/N$.   
\end{theorem}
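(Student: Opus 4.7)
My strategy is to establish completeness, $2$-special soundness (which yields soundness error $1/N$), and honest-verifier zero-knowledge, in that order. Completeness follows by inspection of an honest execution. By construction of the helper one has $\pi[\bm{u}] + \bm{v} = \bm{r}$, from which an easy induction on the prover's chain gives $\bm{s}_N = \bm{r} + \pi[\bm{x}]$. Starting from $\bar{\bm{t}}_N = \bar{\bm{r}} = \bm{r}$, the verifier's downward recurrence then yields $\bar{\bm{t}}_\alpha = \bm{s}_\alpha - \pi_\alpha \circ \cdots \circ \pi_1[\bm{x}]$, and adding $\bm{z}_4 = \pi_\alpha \circ \cdots \circ \pi_1[\bm{x}]$ makes $\bar{\bm{s}}_\alpha = \bm{s}_\alpha$; the upward recurrence for $i \in [1,N]\setminus\{\alpha\}$ then reproduces all of the prover's $\bm{s}_i$. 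The $\com_1$ check finally reduces to $\bm{H}\bm{z}_1 - \bm{y} = \bm{H}\bm{u}$, i.e.\ to $\bm{H}\bm{x} = \bm{y}$.

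For soundness I would extract from two accepting transcripts produced against the same honest helper output (so $\theta$ and $\xi$ are shared) with the same first prover message $\com_2$ but distinct challenges $\alpha \ne \alpha'$. From $z_2^{(\alpha)} = (\theta_i)_{i \ne \alpha}$ and $z_2^{(\alpha')} = (\theta_i)_{i \ne \alpha'}$, binding of $\com_1$ on the inner $\com_{1,i}$ openings forces the two recovered seed sets to agree where they overlap, so the extractor obtains every $\phi_i$, hence every $\pi_i$ and $\bm{v}_i$, hence $\pi$, $\bm{v}$, and $\bm{u} = \pi^{-1}[\bm{r} - \bm{v}]$. Binding of $\com_1$ on its $\bm{H}\bm{u}$ block then gives $\bm{H}(\bm{z}_1 - \bm{u}) = \bm{y}$, so $\bm{x} := \bm{z}_1 - \bm{u}$ satisfies the syndrome equation. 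Binding of $\com_2$ forces the sequence $(\bm{s}_i)_i$ to be consistent across both transcripts; combining the upward relation $\bm{s}_i = \pi_i[\bm{s}_{i-1}] + \bm{v}_i$ at $i = \alpha$ (which is available in the transcript that does not hide $\pi_\alpha$) with $\bm{s}_\alpha = \bar{\bm{t}}_\alpha + \bm{z}_4^{(\alpha)}$ yields $\bm{z}_4^{(\alpha)} = \pi_\alpha \circ \cdots \circ \pi_1[\bm{x}]$. Since the verifier demands $\hw{\bm{z}_4^{(\alpha)}} = \omega$ and permutations preserve Hamming weight, $\hw{\bm{x}} = \omega$, so $\bm{x}$ is a valid $\SD$ witness. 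A standard rewinding argument converts any cheating prover with success probability $> 1/N$ into such a pair in expected polynomial time.

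For HVZK, the simulator is given $\alpha$ in advance. It samples $\theta, \xi$ and runs the helper faithfully to produce $(\phi_i, \pi_i, \bm{v}_i, r_{1,i}, \com_{1,i})_{i \in [N]}$, $\pi$, $\bm{v}$, $\bm{r}$, and $\bm{u}$; it then replaces $\bm{u} + \bm{x}$ by a fresh uniform $\bm{z}_1 \sampler \Ftn$ and $\pi_\alpha \circ \cdots \circ \pi_1[\bm{x}]$ by a fresh uniform weight-$\omega$ vector $\bm{z}_4$, computes the $\bar{\bm{s}}_i$ using the verifier's recipe, and commits to the resulting $\com_1, \com_2$; the response is $(\bm{z}_1, (\theta_i)_{i \ne \alpha}, \xi, \bm{z}_4, \com_{1,\alpha})$. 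Indistinguishability rests on three ingredients: in a real transcript $\bm{u} = \pi^{-1}[\bm{r} - \bm{v}]$ is uniform on $\Ftn$ (because $\bm{r}$ is uniform and $\pi$ is a bijection), so $\bm{z}_1 = \bm{u} + \bm{x}$ is uniform; $\pi_\alpha \circ \cdots \circ \pi_1$ is a uniformly random permutation, so $\bm{z}_4$ is uniform among weight-$\omega$ vectors; and the only unopened piece $\com_{1,\alpha}$ is indistinguishable from its simulation by the computational hiding property of the commitment scheme, since $\theta_\alpha$ and $r_{1,\alpha}$ are never revealed. The step I expect to require the most care is the ``meet-in-the-middle'' discrepancy at index $\alpha$: in the simulation the upward chain from $\bm{z}_1$ and the downward chain from $\bar{\bm{r}}$ need not agree at $\alpha$, and I would verify that the independent uniformity of $\bm{z}_1$ and $\bm{z}_4$, conditioned on the revealed seeds (equivalently on every $\pi_i, \bm{v}_i$ for $i \ne \alpha$), exactly absorbs this mismatch and leaves the joint distribution of $(\com_1, \com_2, \rsp)$ identical to the real one up to the hiding reduction.
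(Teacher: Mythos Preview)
Your treatment of correctness and special soundness matches the paper's almost exactly: the paper's extractor outputs $\pi_1^{-1}\circ\cdots\circ\pi_\alpha^{-1}[\bm{z}_4]$ and then proves this equals $\bm{z}_1-\bm{u}$, so you are computing the same witness, just naming it from the other end.

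The HVZK part, however, diverges from the paper in a way that creates a real gap under the PoK-with-Helper definition the paper adopts (Beullens's). In that definition the simulator is \emph{given} the helper randomness $(\theta,\xi)$ and must output a transcript whose first message equals the helper's $\com_1=\mathsf{Setup}(\theta,\xi)$, i.e.\ $\com_1=\mathsf{Hash}(\bm{H}\bm{u}\,\|\,\bm{r}\,\|\,(\com_{1,i})_i)$ with $\bm{u}$ fixed by $(\theta,\xi)$. Your simulator samples $(\theta,\xi)$ itself and then draws a \emph{fresh} uniform $\bm{z}_1$. If you keep the helper's $\com_1$, the verifier's check $\com_1=\mathsf{Hash}(\bm{H}\bm{z}_1-\bm{y}\,\|\,\bar{\bm{r}}\,\|\,(\bar\com_{1,i})_i)$ forces $\bm{H}\bm{z}_1-\bm{y}=\bm{H}\bm{u}$, which a fresh $\bm{z}_1$ satisfies only with probability $2^{-(n-k)}$, so the simulated transcript is not accepting. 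If instead you recompute $\com_1$ from the verifier's formula, the transcript accepts but $\com_1\neq\mathsf{Setup}(\theta,\xi)$, so you no longer meet the Helper-HVZK requirement.

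The paper's simulator avoids this by constraining $\bm{z}_1$ to the right coset: it picks any $\tilde{\bm{x}}_1$ with $\bm{H}\tilde{\bm{x}}_1=\bm{y}$ (easy linear algebra), sets $\tilde{\bm{z}}_1=\bm{u}+\tilde{\bm{x}}_1$ so that $\bm{H}\tilde{\bm{z}}_1-\bm{y}=\bm{H}\bm{u}$ exactly, and separately draws a weight-$\omega$ vector $\tilde{\bm{x}}_2$ to play the role of your $\bm{z}_4$ via $\tilde{\bm{z}}_4=\pi_\alpha\circ\cdots\circ\pi_1[\tilde{\bm{x}}_2]$. Your indistinguishability analysis (uniformity of $\bm{u}$ from the hidden $\bm{v}_\alpha$, uniformity of the weight-$\omega$ image from the hidden $\pi_\alpha$, hiding of $\com_{1,\alpha}$) then goes through unchanged. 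So the fix is small, but without it your simulator as written does not satisfy the definition the theorem is stated for.
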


\begin{proof}
Proof of Theorem~\ref{thm:pok-sd1} can be found in Appendix~\ref{app:pok-sd1}.
\end{proof}

\section{New PoK related to the $\PKP$, $\SD$ and $\RSD$ problems} \label{sec:pokls}

\subsection{PoK leveraging structure for the $\IPKP$ problem} \label{sec:pok-pkp2}

We present in Figure~\ref{fig:pkp} a PoK leveraging structure for the $\IPKP$ problem using linearity over $\Fq$ along with the shared permutation from \cite{FJR21}.

\begin{center}
  \resizebox{1\textwidth}{!}{
    \pseudocode{%
      \hspace{440pt} \\[-2\baselineskip][\hline]\\[-8pt]
      \underline{\pcalgostyle{Inputs~\&~Public~Data}} \\
      w = \pi, ~ x = (\bm{H}, \bm{x}, \bm{y}) \\
      \chsps = \Fq^*, ~ \chs = \kappa \\[\baselineskip]
      \underline{\prover_1(w, x)} \\
      \theta \sampler \bit^{\lambda} \\
      \pcfor i \in \{N,\ldots, 1 \} \pcdo \\
      \pcind \theta_{i} \samples{\theta} \bit^{\lambda}, ~ \phi_{i} \samples{\theta_i} \bit^{\lambda} \\
      \pcind \pcif i \neq 1 \pcdo \\
      \pcind \pcind \pi_i \samples{\phi_i} \hperm{n}, ~ \bm{v}_i \samples{\phi_i} \Fq^n,
      ~ r_{1, i} \samples{\theta_i} \bit^{\lambda}, ~ \com_{1, i} = \commith{r_{1,i}, \, \phi_i} \\
      \pcind \pcelse \\
      \pcind \pcind \pi_1 = \pi^{-1}_2 \circ \cdots \circ \pi^{-1}_N \circ \pi, ~ \bm{v}_1 \samples{\phi_1} \Fq^n,
      ~ r_{1, 1} \samples{\theta_1} \bit^{\lambda}, ~ \com_{1, 1} = \commith{r_{1,1}, \, \pi_1 \, || \, \phi_1} \\
      \pcind \pcend \\
      \pcend \\
      \bm{v} = \bm{v}_N + \sum\nolimits_{i \in \intoneto{N - 1}} \pi_N \circ \cdots \circ \pi_{i + 1}[\bm{v}_i] \\
      \com_1 = \commitb{\bm{H} \bm{v} \, || \, (\com_{1,i})_{i \in \intoneto{N}}} \\[\baselineskip] 
      \underline{\prover_2(w, x, \kappa)} \\
      \bm{s}_0 = \kappa \cdot \bm{x} \\
      \pcfor i \in [1, N] \pcdo \\
      \pcind \bm{s}_i = \pi_i[\bm{s}_{i - 1}] + \bm{v}_i \\ 
      \pcend \\
      \com_2 = \commitb{(\bm{s}_i)_{i \in \intoneto{N}}} \\[\baselineskip]
      \underline{\prover_2(w, x, \kappa, \alpha)} \\
      \bm{z}_1 = \bm{s}_{\alpha} \\
      \pcif \alpha \neq 1 \pcdo \\
      \pcind z_2 = \pi_1 \, || \, (\theta_{i})_{i \in \intoneto{N} \setminus \alpha} \\
      \pcelse \\
      \pcind z_2 = (\theta_{i})_{i \in \intoneto{N} \setminus \alpha} \\
      \pcend \\
      \rsp = (\bm{z}_1, z_2, \com_{1,\alpha})
      ~\\[3pt][\hline]\\[-2\baselineskip]
  }}
\end{center}

\begin{center}
  \resizebox{1\textwidth}{!}{
    \pseudocode{%
      \hspace{440pt} \\[-2\baselineskip][\hline]\\[-8pt]
      \underline{\verifier(x, \com_1, \kappa, \com_2, \alpha, \rsp)} \\
      \tx{Compute } (\bar{\phi_i}, \bar{r}_{1,i}, \bar{\pi}_i, \bar{\bm{v}}_i)_{i \in \intoneto{N} \setminus \alpha} \tx{ from } z_2 \\
      \bar{\bm{s}}_0 = \kappa \cdot \bm{x}, ~ \bar{\bm{s}}_{\alpha} = \bm{z}_1, ~ \bar{\com}_{1, \alpha} = \com_{1, \alpha} \\
      \pcfor i \in [1, N] \setminus \alpha \pcdo \\
      \pcind \bar{\bm{s}}_i = \bar{\pi}_i[\bar{\bm{s}}_{i - 1}] + \bar{\bm{v}}_i \\ 
      \pcind \pcif i \neq 1 \pcdo \\
      \pcind \pcind \bar{\com}_{1,i} = \commith{\bar{r}_{1,i}, \, \bar{\phi}_i} \\
      \pcind \pcelse \\
      \pcind \pcind \bar{\com}_{1,1} = \commith{\bar{r}_{1,1}, \, \bar{\pi}_1 \, || \, \bar{\phi}_1} \\
      \pcind \pcend \\
      \pcend \\
      b_1 \samplen \big( \com_{1} = \commitb{\bm{H} \bar{\bm{s}}_N - \kappa \cdot \bm{y} \, || \, (\com_{1,i})_{i \in \intoneto{N}}} \big) \\
      b_2 \samplen \big( \com_2 = \commitb{(\bar{\bm{s}}_i)_{i \in \intoneto{N}}} \big) \\ 
      \pcreturn b_1 \wedge b_2 
      ~\\[3pt][\hline]\\[-2\baselineskip]
    }
  }
  \captionof{figure}{PoK leveraging structure for the $\IPKP$ problem \label{fig:pkp}}
\end{center}

\begin{theorem} \label{thm:pok-pkp2}
  If the hash function used is collision-resistant and if the commitment scheme used is binding and hiding, then the protocol depicted in Figure~\ref{fig:pkp} is an honest-verifier zero-knowledge PoK for the $\IPKP$ problem with soundness error equal to $\frac{1}{N} + \frac{N - 1}{N \cdot (q - 1)}$.
\end{theorem}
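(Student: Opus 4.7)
The plan is to verify three standard properties---correctness, knowledge soundness with error $\frac{1}{N} + \frac{N-1}{N(q-1)}$, and honest-verifier zero-knowledge. Correctness is a routine telescoping: iterating $\bm{s}_i = \pi_i[\bm{s}_{i-1}] + \bm{v}_i$ from $\bm{s}_0 = \kappa \bm{x}$ yields $\bm{s}_N = \pi[\kappa \bm{x}] + \bm{v}$ with $\pi = \pi_N \circ \cdots \circ \pi_1$ and $\bm{v} = \bm{v}_N + \sum_{i<N} \pi_N \circ \cdots \circ \pi_{i+1}[\bm{v}_i]$; since $\bm{H}\pi[\bm{x}] = \bm{y}$, this gives $\bm{H}\bm{s}_N - \kappa\bm{y} = \bm{H}\bm{v}$, which is exactly the equality on which both $b_1$ and $b_2$ hinge.

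For knowledge soundness, I would construct a $\ppt$ extractor operating on a tree of accepting transcripts containing two distinct first challenges $\kappa \neq \kappa'$ and, under each, two distinct second challenges $\alpha \neq \alpha'$. Within a single $\kappa$-branch, the two second-challenge transcripts together open every $\com_{1,i}$ (each transcript reveals $N-1$ commitments, and the two missing indices differ), so the full tuple $(\pi_i, \bm{v}_i)_{i \in [N]}$ is recovered; binding of $\com_2$ then forces the chain $\bm{s}_i = \pi_i[\bm{s}_{i-1}] + \bm{v}_i$ to hold globally, so that $\bm{s}_N = \pi[\kappa\bm{x}] + \bm{v}$ for the induced $\pi$ and $\bm{v}$. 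Combined with the $b_1$ check, this yields $\kappa(\bm{H}\pi[\bm{x}] - \bm{y}) = \bm{H}(\bm{v}^* - \bm{v})$, where $\bm{v}^*$ denotes the value committed in $\com_1$. By binding of $\com_1$ (hence of the inner $\com_{1,i}$'s), the $\kappa'$-branch recovers the same $\pi_i$'s and therefore the same $\pi, \bm{v}, \bm{v}^*$; subtracting the two instances of the equation above gives $(\kappa - \kappa')(\bm{H}\pi[\bm{x}] - \bm{y}) = \bm{0}$, forcing $\bm{H}\pi[\bm{x}] = \bm{y}$ whenever $\kappa \neq \kappa'$, and otherwise surfacing a commitment collision. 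A standard counting argument then translates this extractor into the stated soundness error, which matches the two disjoint cheating probabilities $\frac{1}{N}$ (correctly guessing $\alpha$ in advance) and $\frac{N-1}{N(q-1)}$ (guessing the critical $\kappa$ conditioned on the $\alpha$-strategy failing), along the lines sketched in Section~\ref{sec:paradigm}.

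For honest-verifier zero-knowledge, I would build a simulator $\simulator(x, \kappa, \alpha)$ that, for every $i \neq \alpha$, samples $\theta_i, \phi_i, \pi_i, \bm{v}_i$ honestly and computes $\com_{1,i}$ accordingly, while for $i = \alpha$ it substitutes $\com_{1,\alpha}$ by a uniformly random commitment invoking the hiding property. The response slot $\bm{z}_1 = \bar{\bm{s}}_\alpha$ is sampled uniformly from $\Fq^n$, matching its real distribution since the blinding $\bm{v}_\alpha$ is itself uniform and never revealed; the remaining $\bar{\bm{s}}_i$'s are obtained by running the chain forward from $\bar{\bm{s}}_0 = \kappa\bm{x}$ up to $\bar{\bm{s}}_{\alpha-1}$ and from $\bar{\bm{s}}_\alpha$ up to $\bar{\bm{s}}_N$; finally $\com_1$ is assembled with $\bm{H}\bar{\bm{s}}_N - \kappa\bm{y}$ placed in the slot reserved for $\bm{H}\bm{v}$. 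Indistinguishability from a real transcript then reduces cleanly to hiding of the commitment scheme. I expect the knowledge soundness step to be the main obstacle: tying the two $\kappa$-branches together via the extracted $\pi_i$'s to rule out the cheating strategy where the prover commits to a $\bm{v}^*$ tailored to a guessed $\kappa$ is the delicate part, and it is precisely this argument that produces the $\frac{N-1}{N(q-1)}$ contribution to the soundness error.
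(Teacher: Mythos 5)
Your overall strategy matches the paper's: correctness via telescoping, $(q-1,N)$-special soundness via an extractor that recovers all $(\pi_i,\bm{v}_i)$ from two $\alpha$-branches and then cancels the $\kappa$-dependence by subtracting two $\kappa$-branches, and HVZK via a simulator that exploits the uniform mask $\bm{v}_\alpha$. The soundness argument in particular is the same as the paper's (four leaves $(\kappa,\alpha_1),(\kappa,\alpha_2),(\kappa',\alpha_1),(\kappa',\alpha_2)$; binding of $\com_{1,i},\com_1,\com_2$ pins down the chain and yields $(\kappa-\kappa')(\bm{H}\bar\pi[\bm{x}]-\bm{y})=\bm{0}$).

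Two points worth noting. First, your HVZK simulator is a legitimate variant of the paper's: you sample $\bar{\bm{s}}_\alpha$ uniformly and then \emph{back-compute} $\com_1$ to place $\bm{H}\bar{\bm{s}}_N - \kappa\bm{y}$ in its first slot, whereas the paper commits $\com_1$ to $\bm{H}\bm{v}$ first and then crafts $\tilde{\bm{s}}_\alpha$ from a preimage $\tilde{\bm{x}}$ with $\bm{H}\tilde{\bm{x}} = \kappa\bm{y}$ so that the check passes. Both are valid for HVZK since all challenges are given up front; yours avoids needing to solve a linear system, while the paper's makes the distributional match of $\com_1$ slightly more immediate. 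Second, and more importantly, you write that the simulator samples $\pi_i$ ``honestly'' for all $i \neq \alpha$; this cannot hold for $i = 1$ when $\alpha \neq 1$. In the real protocol $\pi_1 = \pi_2^{-1} \circ \cdots \circ \pi_N^{-1} \circ \pi$ is a function of the \emph{secret} $\pi$, which the simulator does not possess, and $\pi_1$ is revealed in the clear inside $z_2$. The simulator must instead draw $\tilde{\pi}_1 \sampler \hperm{n}$ uniformly, which is indistinguishable from the real $\pi_1$ precisely because $\pi_\alpha$ stays hidden; this substitution is the heart of the zero-knowledge argument for this protocol, so it should be made explicit rather than folded into ``sampling honestly.''
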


\begin{proof}
Proof of Theorem~\ref{thm:pok-pkp2} can be found in Appendix~\ref{app:pok-pkp2}.
\end{proof}

\subsection{PoK leveraging structure for the $\QCSD$ problem over $\Ft$} \label{sec:pok-sd2}

We apply results from Sections~\ref{sec:paradigm} and~\ref{sec:amplification} on top of the PoK with Helper from Section~\ref{sec:pok-sd1}.
This PoK leverages quasi-cyclicity over $\Ft$ and is depicted in Figure~\ref{fig:pok4-qc}.


\begin{center}
  \resizebox{1\textwidth}{!}{
    \pseudocode{%
      \hspace{440pt} \\[-2\baselineskip][\hline]\\[-8pt]
      \underline{\pcalgostyle{Inputs~\&~Public~Data}} \\
      w = (\bm{x}_i)_{i \in \intoneto M}, ~ x = (\bm{H}, (\bm{y}_i)_{i \in \intoneto{M}}), \\
      \chsps = \intoneto{M} \times \intoneto{k}, ~ \chs = (\mu, \kappa) \\[\baselineskip]
      \underline{\prover_1(w, x)} \\
      \theta \sampler \bit^{\lambda}, ~ \xi \sampler \bit^{\lambda} \\
      \pcfor i \in \intoneto{N} \pcdo \\
      \pcind \theta_{i} \samples{\theta} \bit^{\lambda}, ~ \phi_{i} \samples{\theta_i} \bit^{\lambda}, 
      ~ \pi_i \samples{\phi_i} \hperm{n}, ~ \bm{v}_i \samples{\phi_i} \Ftn,
      ~ r_{1, i} \samples{\theta_i} \bit^{\lambda}, ~ \com_{1, i} = \commith{r_{1,i}, \, \phi_{i}} \\
      \pcend \\
      \pi = \pi_N \circ \cdots \circ \pi_1, ~ \bm{v} = \bm{v}_N + \sum\nolimits_{i \in \intoneto{N - 1}} \pi_N \circ \cdots \circ \pi_{i + 1}[\bm{v}_i],
      ~ \bm{r} \samples{\xi} \Ftn, ~ \bm{u} = \pi^{-1}[\bm{r} - \bm{v}] \\
      \com_1 = \commitb{\bm{H} \bm{u} \, || \, \pi[\bm{u}]+ \bm{v} \, || \, (\com_{1,i})_{i \in \intoneto{N}}} \\[\baselineskip] 
      \underline{\prover_2(w, x, \mu, \kappa)} \\
      \bm{x}_{\mu, \kappa} = \rot_{\kappa}(\bm{x}_{\mu}), ~ \bm{s}_0 = \bm{u} + \bm{x}_{\mu, \kappa} \\
      \pcfor i \in [1, N] \pcdo \\
      \pcind \bm{s}_i = \pi_i[\bm{s}_{i - 1}] + \bm{v}_i \\ 
      \pcend \\
      \com_2 = \commitb{\bm{u} + \bm{x}_{\mu, \kappa} \, || \, (\bm{s}_i)_{i \in \intoneto{N}}}
      ~\\[3pt][\hline]\\[-2\baselineskip]
  }}
\end{center}

\begin{center}
  \resizebox{1\textwidth}{!}{
    \pseudocode{%
      \hspace{440pt} \\[-2\baselineskip][\hline]\\[-8pt]
      \underline{\prover_3(w, x, \mu, \kappa, \alpha)} \\
      \bm{z}_1 = \bm{u} + \bm{x}_{\mu, \kappa}, ~ z_2 = (\theta_{i})_{i \in \intoneto{N} \setminus \alpha},  
      ~ z_3 = \xi, ~ \bm{z}_4 = \pi_{\alpha} \circ \dots \circ \pi_1[\bm{x}_{\mu, \kappa}] \\
      \rsp = (\bm{z}_1, z_2, z_3, \bm{z}_4, \com_{1,\alpha}) \\[\baselineskip]
      \underline{\verifier(x, \com_1, (\mu, \kappa), \com_2, \alpha, \rsp)} \\
      \tx{Compute } (\bar{\phi_i}, \bar{r}_{1,i}, \bar{\pi}_i, \bar{\bm{v}}_i)_{i \in \intoneto{N} \setminus \alpha} \tx{ from } z_2 
      \tx{ and } \bar{\bm{t}}_N = \bar{\bm{r}} \tx{ from } z_3 \\
      \pcfor i \in \{N, \ldots, \alpha + 1\} \pcdo \\
      \pcind \bar{\bm{t}}_{i-1} = \bar{\pi}^{-1}_i[\bar{\bm{t}}_i - \bar{\bm{v}}_i] \\
      \pcend \\
      \bar{\bm{s}}_0 = \bm{z}_1, ~ \bar{\bm{s}}_{\alpha} = \bar{\bm{t}}_\alpha + \bm{z}_4, ~ \bar{\com}_{1, \alpha} = \com_{1, \alpha} \\
      \pcfor i \in [1, N] \setminus \alpha \pcdo \\
      \pcind \bar{\bm{s}}_i = \bar{\pi}_i[\bar{\bm{s}}_{i - 1}] + \bar{\bm{v}}_i, 
      ~ \bar{\com}_{1,i} = \commith{\bar{r}_{1,i}, \, \bar{\phi_i}} \\
      \pcend \\
      b_1 \samplen \big( \com_{1} = \commitb{\bm{H} \bm{z}_1 - \rot_{\kappa}(\bm{y}_{\mu}) \, || \, \bar{\bm{r}} \, || \, (\bar{\com}_{1,i})_{i \in \intoneto{N}}} \big) \\
      b_2 \samplen \big( \com_2 = \commitb{\bm{z}_1 \, || \, (\bar{\bm{s}}_i)_{i \in \intoneto{N}}} \big) \\
      b_3 \samplen \big( \hw{\bm{z}_4} = \omega \big) \\
      \pcreturn b_1 \wedge b_2 \wedge b_3
      ~\\[3pt][\hline]\\[-2\baselineskip]
    }
  }
  \captionof{figure}{PoK leveraging structure for the $\QCSD(n, k, w, M)$ problem over $\Ft$ \label{fig:pok4-qc}}
\end{center}

\begin{theorem} \label{thm:pok-sd2-qc}
If the hash function used is collision-resistant and if the commitment scheme used is binding and hiding, then the protocol depicted in Figure~\ref{fig:pok4-qc} is an honest-verifier zero-knowledge PoK for the $\QCSD(n, k, w, M)$ problem with soundness error equal to $\frac{1}{N} + \frac{(N - 1)(\Delta - 1)}{N \cdot M \cdot k}$ for some parameter $\Delta$.
\end{theorem}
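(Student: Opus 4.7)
The plan is to establish the three standard ingredients -- correctness, honest-verifier zero-knowledge, and extractability -- separately, closely following the template of the Helper-based Theorem~\ref{thm:pok-sd1} and adapting only the soundness analysis to the new 5-round structure. For \emph{correctness}, the crucial identity is $\bm{H}\bm{z}_1 - \rot_{\kappa}(\bm{y}_{\mu}) = \bm{H}\bm{u}$, which holds because $\bm{H}$ is quasi-cyclic so that $\bm{H}\cdot\rot_{\kappa}(\bm{x}_{\mu})^{\top} = \rot_{\kappa}(\bm{y}_{\mu})^{\top}$, together with the fact that the verifier's backward recursion from $\bar{\bm{t}}_N=\bar{\bm{r}}$ down to $\bar{\bm{t}}_{\alpha}$ meets the forward recursion from $\bar{\bm{s}}_0=\bm{z}_1$ up to $\bar{\bm{s}}_{\alpha-1}$ through $\bm{z}_4 = \pi_{\alpha}\circ\cdots\circ\pi_1[\bm{x}_{\mu,\kappa}]$; the weight check $\hw{\bm{z}_4}=\omega$ follows because permutations preserve Hamming weight.

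For \emph{honest-verifier zero-knowledge}, I would construct a simulator that, given challenges $(\mu,\kappa,\alpha)$, samples $\theta$ and $\xi$ as in an honest execution and expands all seed-derived quantities faithfully, then substitutes $\com_{1,\alpha}$ by a commitment to an independent random string, replaces the missing $\bm{z}_1$ by a uniformly random vector of $\Ftn$ (matching the distribution of $\bm{u}+\bm{x}_{\mu,\kappa}$ since $\bm{u} = \pi^{-1}[\bm{r}-\bm{v}]$ is uniform), and samples $\bm{z}_4$ uniformly at random among weight-$\omega$ vectors. The remaining $(\bar{\bm{s}}_i)_{i\neq\alpha}$ are reconstructed by running the verifier's recursion, and $\com_1, \com_2$ are assembled so as to open consistently. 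Indistinguishability from a real transcript reduces to the hiding property of the substituted commitment combined with the pseudorandomness of the seed expansion, exactly as in the proof of Theorem~\ref{thm:pok-sd1}.

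For \emph{soundness}, the claimed error decomposes as $p_1 + (1-p_1)\,p_2$ with $p_1 = (\Delta-1)/(Mk)$ and $p_2 = 1/N$, which fits the generic $(\Delta,2)$-out-of-$(Mk,N)$ special-soundness formula. I would therefore define $\Delta$ as the smallest integer such that any first message $\com_1$ admitting accepting continuations for at least $\Delta$ distinct pairs $(\mu,\kappa)$ either breaks the commitment/hash or yields a $\QCSD(n,k,w,M)$ solution. Given a $(\Delta,2)$-tree of accepting transcripts, each of the $\Delta$ branches contains two distinct second challenges $\alpha\neq\alpha'$ sharing the same $(\com_1,(\mu,\kappa),\com_2)$; combining the forward and backward recursions in meet-in-the-middle fashion at indices $\alpha$ and $\alpha'$ lets me reconstruct $\bar{\bm{x}}_{\mu,\kappa}$ satisfying $\bm{H}\bar{\bm{x}}_{\mu,\kappa}^{\top} = \rot_{\kappa}(\bm{y}_{\mu})^{\top}$ and $\hw{\bar{\bm{x}}_{\mu,\kappa}}=\omega$, from which the inverse rotation produces a valid witness for the $\mu$-th $\QCSD$ instance. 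Applying the generic conversion from $(\Delta,2)$-special-soundness to soundness error then delivers the stated bound.

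The main obstacle will be pinning down the exact value of $\Delta$ and ruling out the structural cheating strategies that exploit the DOOM setting together with the quasi-cyclic rotation, i.e.\ bounding the number of pairs $(\mu,\kappa)$ for which a single fixed first-round commitment can be extended into an accepting execution without either breaking the commitment or yielding a $\QCSD$ solution. Everything else is a direct adaptation of Theorem~\ref{thm:pok-sd1}, since the new protocol differs from the Helper-based one only in that the verifier-supplied $(\mu,\kappa)$ together with the prover-sampled $\xi$ replace the Helper-supplied randomness.
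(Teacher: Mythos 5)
Your correctness and HVZK sketches line up with the paper's Appendix~\ref{app:pok-sd2-qc} (which explicitly says those two parts are as in Theorem~\ref{thm:pok-sd1}), and your soundness decomposition $\frac{\Delta-1}{Mk} + (1-\frac{\Delta-1}{Mk})\cdot\frac{1}{N} = \frac{1}{N} + \frac{(N-1)(\Delta-1)}{NMk}$ matches the claimed error. However, the core of your soundness argument has a genuine gap: you assert that a $(\Delta,2)$-tree lets you ``reconstruct $\bar{\bm{x}}_{\mu,\kappa}$ satisfying $\bm{H}\bar{\bm{x}}_{\mu,\kappa}^\top = \rot_\kappa(\bm{y}_\mu)^\top$'', i.e.\ that the meet-in-the-middle directly produces a $\QCSD$ witness. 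That is not true here, precisely because the Helper has been removed. In Theorem~\ref{thm:pok-sd1} the trusted Helper guarantees that the first slot of $\com_1$ really equals $\bm{H}\bm{u}$ for the $\bm{u}$ implicit in the recursions; the binding of $\com_1$ then forces $\bm{H}(\bm{z}_1-\bm{u})=\bm{y}$. In Figure~\ref{fig:pok4-qc} a malicious prover chooses $\com_1$ and can commit to an arbitrary ``first slot'' $\bm{c}_1$. Binding only tells you that $\bm{H}\bm{z}_1^{(\mu,\kappa)} - \rot_\kappa(\bm{y}_\mu)$ is the \emph{same} value $\bm{c}_1$ across every accepting $(\mu,\kappa)$, and that the meet-in-the-middle identity holds up to another fixed offset $\bm{c}_2$. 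Working the algebra through, what you actually extract from the tree are weight-$\omega$ vectors $\bm{d}_j$ and a fixed vector $\bm{c}_3$ with $\bm{H}\bm{d}_j^\top + \bm{c}_3 = \rot_{\kappa_j}(\bm{y}_{\mu_j}^\top)$ — a solution to the $\DiffSD$ problem (Definition~\ref{def:diffsd}), not to $\QCSD$, unless $\bm{c}_3 = 0$.

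This is exactly the missing idea: the paper does not define $\Delta$ circularly as ``the threshold after which extraction works''; it introduces the intermediary $\DiffSD$ problem, proves the extractor outputs a $\DiffSD$ solution, and then proves a separate reduction (Theorem~\ref{thm:reduction-diffsd}) bounding the probability that a $\DiffSD$ instance derived from $\QCSD$ admits parasitic solutions with $\bm{c}\neq 0$. That probability bound $1 - M p - (2^{n-k}-2)p^\Delta$ with $p = \binom{n}{\omega}/2^{n-k}$, proved by a case split on whether $\bm{c}$ is rotation-stable (using that $x^n - 1$ factors as $(x-1)$ times an irreducible polynomial when $2$ is primitive mod $n$), is what makes a concrete, non-circular choice of $\Delta$ possible. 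Without it, your step from ``$\Delta$ accepting first-challenges'' to ``a $\QCSD$ witness or a commitment break'' is an unproved assertion, and it is in fact false for small $\Delta$ since spurious $(\bm{c}_3\neq 0)$-solutions genuinely exist. You flagged pinning down $\Delta$ as ``the main obstacle,'' which is the right instinct, but the preceding claim that extraction already gives a $\QCSD$ solution is wrong, and overcoming the obstacle requires introducing and analyzing the intermediate problem, not merely counting.
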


\begin{proof}
Proof of Theorem~\ref{thm:pok-sd2-qc} can be found in Appendix~\ref{app:pok-sd2-qc}.
\end{proof}

\noindent \emph{Remark.} The parameter $\Delta$ is related to the security of the $\DiffSD$ problem.
The $\QCSD$ problem reduces to the intermediary $\DiffSD$ problem for sufficient large values of $\Delta$ as showed in Appendix~\ref{app:pok-sd2-qc}.
When $\Delta$ is not large enough, the security reduction does not hold which enable additional cheating strategies for the adversary hence impact the soundness of the protocol.




\subsection{PoK leveraging structure for the $\IRSL$ problem} \label{sec:pok-rsd2}

One can adapt the protocols described in Sections~\ref{sec:pok-sd1} and~\ref{sec:pok-sd2} to the rank metric setting by replacing permutations by isometries for the rank metric.
Doing so, one get a PoK with Helper for the $\RSD$ problem as well as a PoK leveraging structure for the $\IRSD$ problem.
In addition, one can apply the challenge space amplification technique presented in Section~\ref{sec:amplification} in order to get a PoK leveraging structure for $\IRSL$ problem as depicted in Figure~\ref{fig:pok-rsd2}.
In practice, one has to take into account the cases where $\omega' < \omega$ mentionned in Section~\ref{sec:amplification} however we omit such cases in Figure~\ref{fig:pok-rsd2} for conciseness.

\begin{center}
  \resizebox{1\textwidth}{!}{
    \pseudocode{%
      \hspace{440pt} \\[-2\baselineskip][\hline]\\[-8pt]
      \underline{\pcalgostyle{Inputs~\&~Public~Data}} \\
      w = (\bm{x}_i)_{i \in \intoneto M}, ~ x = (\bm{H}, (\bm{y}_i)_{i \in \intoneto{M}}), \\
      \chsps = (\Fq)^{Mk} \setminus (0, \cdots, 0), ~ \chs = \gamma = (\gamma_{i, j})_{i \in \intoneto{M}, j \in \intoneto{k}}
      ~\\[3pt][\hline]\\[-2\baselineskip]
  }}
\end{center}

\begin{center}
  \resizebox{1\textwidth}{!}{
    \pseudocode{%
      \hspace{440pt} \\[-2\baselineskip][\hline]\\[-8pt]
      \underline{\prover_1(w, x)} \\
      \theta \sampler \bit^{\lambda}, ~ \xi \sampler \bit^{\lambda} \\
      \pcfor i \in \intoneto{N} \pcdo \\
      \pcind \theta_{i} \samples{\theta} \bit^{\lambda}, ~ \phi_{i} \samples{\theta_i} \bit^{\lambda},
      ~ \bm{P}_i \samples{\phi_i} \GLFq{m}, ~ \bm{Q}_i \samples{\phi_i} \GLFq{n}, ~ \bm{v}_i \samples{\phi_i} \Fqm^n,
      ~ r_{1, i} \samples{\theta_i} \bit^{\lambda}, ~ \com_{1, i} = \commith{r_{1,i}, \, \phi_{i}} \\
      \pcend \\
      \bm{P} = \bm{P}_N \times \cdots \times \bm{P}_1, ~ \bm{Q} = \bm{Q}_N \times \cdots \times \bm{Q}_1,
      ~ \bm{v} = \bm{v}_N + \sum\nolimits_{i \in \intoneto{N - 1}} \bm{P}_N \times \cdots \times \bm{P}_{i + 1} \cdot \bm{v}_i \cdot \bm{Q}_{i + 1} \times \cdots \times \bm{Q}_N \\
      \bm{r} \samples{\xi} \Ftn, ~ \bm{u} = \bm{P}^{-1} \cdot (\bm{r} - \bm{v}) \cdot \bm{Q}^{-1} \\
      \com_1 = \commitb{\bm{H} \bm{u} \, || \, \bm{P} \cdot \bm{u} \cdot \bm{Q} + \bm{v} \, || \, (\com_{1,i})_{i \in \intoneto{N}}} \\[\baselineskip]
      \underline{\prover_2(w, x, \gamma)} \\
      \bm{x}_{\gamma} = \sum\nolimits_{(i, j) \in \intoneto{M} \times \intoneto{k}} \gamma_{i, j} \cdot \rot_{j}(\bm{x}_{i}),
      ~ \bm{s}_0 = \bm{u} + \bm{x}_{\gamma} \\
      \pcfor i \in [1, N] \pcdo \\
      \pcind \bm{s}_i = \bm{P}_i \cdot \bm{s}_{i - 1} \cdot \bm{Q}_i + \bm{v}_i \\ 
      \pcend \\
      \com_2 = \commitb{\bm{u} + \bm{x}_{\gamma} \, || \, (\bm{s}_i)_{i \in \intoneto{N}}} \\[\baselineskip]
      \underline{\prover_3(w, x, \gamma, \alpha)} \\
      \bm{z}_1 = \bm{u} + \bm{x}_{\gamma}, ~ z_2 = (\theta_{i})_{i \in \intoneto{N} \setminus \alpha},
      ~ z_3 = \xi, ~ \bm{z}_4 = \bm{P}_{\alpha} \times \dots \times \bm{P}_1 \cdot \bm{x}_{\gamma} \cdot \bm{Q}_{1} \times \cdots \times \bm{Q}_{\alpha} \\
      \rsp = (\bm{z}_1, z_2, z_3, \bm{z}_4, \com_{1,\alpha}) \\[\baselineskip]
      \underline{\verifier(x, \com_1, \gamma, \com_2, \alpha, \rsp)} \\
      \tx{Compute } (\bar{\phi_i}, \bar{r}_{1,i}, \bar{\bm{P}}_i, \bar{\bm{Q}}_i, \bar{\bm{v}}_i)_{i \in \intoneto{N} \setminus \alpha} \tx{ from } z_2 
      \tx{ and } \bar{\bm{t}}_N = \bar{\bm{r}} \tx{ from } z_3 \\
      \pcfor i \in \{N, \ldots, \alpha + 1\} \pcdo \\
      \pcind \bar{\bm{t}}_{i-1} = \bar{\bm{P}}^{-1}_i \cdot (\bar{\bm{t}}_i - \bar{\bm{v}}_i) \cdot \bar{\bm{Q}}^{-1}_i \\
      \pcend \\
      \bar{\bm{s}}_0 = \bm{z}_1, ~ \bar{\bm{s}}_{\alpha} = \bar{\bm{t}}_\alpha + \bm{z}_4, ~ \bar{\com}_{1, \alpha} = \com_{1, \alpha} \\
      \pcfor i \in [1, N] \setminus \alpha \pcdo \\
      \pcind \bar{\bm{s}}_i = \bar{\bm{P}}_i \cdot \bar{\bm{s}}_{i - 1} \cdot \bar{\bm{Q}}_i + \bar{\bm{v}}_i \\ 
      \pcind \bar{\com}_{1,i} = \commith{\bar{r}_{1,i}, \, \bar{\phi_i}} \\
      \pcend \\
      b_1 \samplen \big( \com_{1} = \commitb{\bm{H} \bm{z}_1 - \sum\nolimits_{(i, j) \in \intoneto{M} \times \intoneto{k}} \gamma_{i, j} \cdot \rot_{j}(\bm{y}_{i}) \, || \, \bar{\bm{r}} \, || \, (\bar{\com}_{1,i})_{i \in \intoneto{N}}} \big) \\
      b_2 \samplen \big( \com_2 = \commitb{\bm{z}_1 \, || \, (\bar{\bm{s}}_i)_{i \in \intoneto{N}}} \big) \\ 
      b_3 \samplen \big( \hw{\bm{z}_4} = \omega \big) \\
      \pcreturn b_1 \wedge b_2 \wedge b_3
      ~\\[3pt][\hline]\\[-2\baselineskip]
    }
  }
  \captionof{figure}{PoK leveraging structure for the $\IRSL$ problem \label{fig:pok-rsd2}}
\end{center}

\begin{theorem} \label{thm:pok-rsd2}
If the hash function used is collision-resistant and if the commitment scheme used is binding and hiding, then the protocol depicted in Figure~\ref{fig:pok-rsd2} is an honest-verifier zero-knowledge PoK for the $\IRSL$ problem with soundness error equal to $\frac{1}{N} + \frac{(N - 1)(\Delta - 1)}{N (q^{Mk} - 1)}$ for some parameter $\Delta$.
\end{theorem}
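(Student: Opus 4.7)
The plan is to mirror the three-step proof of Theorem~\ref{thm:pok-sd2-qc}, adapted to the rank metric setting: completeness, Honest-Verifier Zero-Knowledge, and $(2,\Delta)$-special-soundness with the claimed error. The rank metric analog replaces permutations by pairs of invertible matrices $(\bm{P}_i, \bm{Q}_i)$ acting as rank-metric isometries, and the challenge amplification is what changes the denominator from $M k$ (as in Section~\ref{sec:pok-sd2}) to $q^{Mk} - 1$.

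For completeness, I would chase the verification equations. The key identity exploits the ideal structure of $\bm{H}$: $\bm{H}\,\bm{z}_1^\top = \bm{H}\,\bm{u}^\top + \sum_{(i,j)} \gamma_{i,j}\, \bm{H}\,\rot_j(\bm{x}_i)^\top = \bm{H}\,\bm{u}^\top + \sum_{(i,j)} \gamma_{i,j}\, \rot_j(\bm{y}_i)^\top$, so the first verifier check recovers $\bm{H}\bm{u}$. The ``meet-in-the-middle'' checks follow from the symmetric forward/backward recurrences $\bm{s}_i = \bm{P}_i \bm{s}_{i-1} \bm{Q}_i + \bm{v}_i$ and $\bar{\bm{t}}_{i-1} = \bar{\bm{P}}_i^{-1}(\bar{\bm{t}}_i - \bar{\bm{v}}_i)\bar{\bm{Q}}_i^{-1}$ meeting consistently at index $\alpha$, exactly as in Theorem~\ref{thm:pok-sd1}. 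The rank-weight check on $\bm{z}_4$ passes since $(\bm{P}_j, \bm{Q}_j)$ preserve rank weight and $\bm{x}_\gamma$ has rank at most $\omega$ (with overwhelming probability equal to $\omega$, the omitted degenerate case).

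For HVZK, I would construct a simulator $\simulator(x, \gamma, \alpha)$ that (i) samples honest seeds $\theta_i$ for $i \neq \alpha$, (ii) picks $\bm{z}_1 \sampler \Fqm^n$ uniformly and $\bm{z}_4$ uniformly among vectors of rank weight $\omega$, (iii) commits to a uniform string for $\com_{1,\alpha}$, (iv) derives $\bar{\bm{s}}_i$ forward from $\bm{z}_1$ and $\bar{\bm{t}}_i$ backward from a fresh $\bar{\bm{r}}$, and (v) assembles $\com_1, \com_2$ matching verification. Indistinguishability follows from the hiding of the commitment on index $\alpha$, the PRG-pseudorandomness of the unrevealed $(\bm{P}_\alpha, \bm{Q}_\alpha, \bm{v}_\alpha)$ which act as a one-time pad on $\bm{x}_\gamma$, and the fact that a random masked $\bm{x}_\gamma$ is distributed as $\bm{z}_1$.

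Special soundness is the main obstacle and proceeds in two stages on a $(2,\Delta)$-tree of accepting transcripts. In the inner stage, for a fixed first challenge $\gamma$, two accepting transcripts with distinct $\alpha \neq \alpha'$ force, by the binding of the commitments $\com_1, \com_2$, a consistent $\bm{x}_\gamma^*$ of rank weight $\omega$ satisfying $\bm{H}\bm{x}_\gamma^{*\top} = \sum_{(i,j)} \gamma_{i,j}\, \rot_j(\bm{y}_i)^\top$; otherwise we extract a commitment collision or a seed inconsistency. In the outer stage, $\Delta$ such recovered pairs $(\gamma^{(\ell)}, \bm{x}_{\gamma^{(\ell)}}^*)$ either yield, via linear combinations, a solution to the underlying $\IRSL$ instance (hence its support $E$), or they fall into the ``bad'' subset on which the reduction to an intermediate rank analog of $\DiffSD$ (in the spirit of Appendix~\ref{app:pok-sd2-qc}) fails. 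Counting the bad $\gamma$'s gives a set of size at most $\Delta - 1$ inside $\chsps$ of cardinality $q^{Mk} - 1$; combined with the $1/N$ guess for $\alpha$, a cheating prover succeeds with probability at most $\frac{1}{N} + (1 - \frac{1}{N}) \cdot \frac{\Delta - 1}{q^{Mk} - 1}$, matching the claimed bound. The hardest step will be formalizing the rank-metric intermediate problem and its reduction to $\IRSL$, since unlike the Hamming case one must carefully handle the event $\omega' < \omega$ noted in the statement, and argue that with the chosen parameters this event is absorbed into the $\Delta$ term.
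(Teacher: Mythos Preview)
Your overall strategy matches the paper's, which itself just points back to the Hamming-metric argument of Appendix~\ref{app:pok-sd2-qc} with permutations replaced by rank isometries $(\bm{P}_i,\bm{Q}_i)$, introduces the intermediate $\DiffIRSL$ problem (Definition~\ref{def:diffrsl}), and invokes the reduction of Theorem~\ref{thm:reduction-diffrsl}. Your completeness and HVZK sketches are fine. One cosmetic point: the tree should be labelled $(\Delta,2)$, not $(2,\Delta)$, since $\gamma$ is the first challenge and $\alpha$ the second (the paper writes the full $(q^{Mk}-1,N)$ by analogy with its $(Mk,N)$ in Appendix~\ref{app:pok-sd2-qc} and then extracts from a $(\Delta,2)$ sub-tree).

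The substantive correction concerns your inner extraction. Two accepting transcripts with $\alpha\neq\alpha'$ at a fixed $\gamma$ do \emph{not} force a clean relation $\bm{H}\bm{x}_\gamma^{*\top}=\sum_{i,j}\gamma_{i,j}\,\rot_j(\bm{y}_i)^\top$, and failure of this relation is not a commitment collision. A dishonest prover is free to place arbitrary strings in the slots of $\com_1$ nominally holding $\bm{H}\bm{u}$ and $\bm{P}\bm{u}\bm{Q}+\bm{v}$; binding of $\com_1,\com_2$ then only yields a vector $\bm{d}_\gamma$ of rank weight $\omega$ together with a \emph{fixed} offset $\bm{c}$ (independent of $\gamma$, determined by the contents of $\com_1$ and the recovered $(\bm{P}_i,\bm{Q}_i,\bm{v}_i)$) such that $\bm{H}\bm{d}_\gamma^\top+\bm{c}=\sum_{i,j}\gamma_{i,j}\,\rot_j(\bm{y}_i)^\top$. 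This offset is not a rare ``bad subset'' event in the outer stage but the generic output for every $\gamma$; collecting $\Delta$ such relations is precisely a $\DiffIRSL$ solution, and the last step is the reduction $\DiffIRSL\to\IRSL$ of Theorem~\ref{thm:reduction-diffrsl}, where $\Delta$ is calibrated so that the probability of a nonzero $\bm{c}$ admitting $\Delta$ consistent low-rank preimages is negligible. Once you rewrite the inner stage to carry this offset, the rest of your plan goes through and coincides with the paper.
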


\begin{proof}
Proof of Theorem~\ref{thm:pok-rsd2} can be found in Appendix~\ref{app:pok-rsd2}.
\end{proof}



\FloatBarrier

\section{Resulting signatures and comparison} \label{sec:comparison}

PoK can be transformed into signature schemes using the Fiat-Shamir transform \cite{FS}.
Several optimizations can be employed in the process, we defer the interested reader to previous work such as \cite{KKW18, Beullens20, GPS21, BGKM22} for additional details.
Hereafter, we keep the inherent 5-round structure of our PoK (except for the one from Section~\ref{sec:pok-sd1} that is collapsed to 3-round) hence parameters are chosen taking into account the attack from \cite{KZ20}.
Moreover, we only consider parameters for $\lambda = 128$ bits of security.
The commitments are instantiated using hash functions along with some randomness.
For the signatures, random salts are added to the hash functions.

\subsection{Signatures based on PoK related to the $\PKP$ problem}

The signature size of our protocol from Section~\ref{sec:pok-pkp2} is detailed in Table~\ref{table:pkp1}.
Table~\ref{table:pkp2} provides a comparison with respect to other PoK for the $\PKP$ problem.
The complexity of the $\PKP$ problem has been studied in \cite{PKP-complexity}.
We consider parameters from \cite{Beullens20} for our comparison namely $(q = 997$, $n = 61$, $m = 28)$.
In addition, we use both $(N = 32, \tau = 42)$ and $(N = 256, \tau = 31)$.


\vspace{-0.5\baselineskip}

\begin{table}[H]
\begin{center}
{\setlength{\tabcolsep}{0.4em}
{\renewcommand{\arraystretch}{1.6}
{\scriptsize
  \begin{tabular}{|l|l|}
    \cline{2-2}
    \multicolumn{1}{c|}{} & \multicolumn{1}{c|}{Signature size} \\ 
    \hline 
    Our Work (Section \ref{sec:pok-pkp2})   & $5 \lambda + \tau \cdot (n \cdot \log_2(q) + n \cdot \log_2(n) + \lambda \cdot \log_2(N) + 2\lambda)$ \\ \hline 
  \end{tabular}
  \vspace{0.5\baselineskip}
  \caption{Signature sizes for our $\PKP$ based construction} \label{table:pkp1}
}}}
\end{center}
\end{table}

\vspace{-4.5\baselineskip}

\begin{table}[H]
\begin{center}
{\setlength{\tabcolsep}{0.4em}
{\renewcommand{\arraystretch}{1.6}
{\scriptsize
  \begin{tabular}{|l|c|c|c|c|l|}
    \cline{2-6}
    \multicolumn{1}{c|}{} & Type & $\pk$ & $\sigma$ & Structure & Security Assumption \\ \hline

    \multirow{2}{*}{SUSHYFISH \cite{Beullens20}}           & Fast  & 0.1 kB & 18.1 kB & \multirow{2}{*}{3-round} & \multirow{2}{*}{$\IPKP$} \\ \cline{2-4}
                                                           & Short & 0.1 kB & 12.1 kB &                          & \\ \hline

    \multirow{2}{*}{Our Work (Section \ref{sec:pok-pkp2})} & Fast  & 0.1 kB &  10.0 kB & \multirow{2}{*}{5-round} & \multirow{2}{*}{$\IPKP$} \\ \cline{2-4}
                                                           & Short & 0.1 kB &  8.9 kB &                          & \\ \hline

  \end{tabular}
  \vspace{0.5\baselineskip}
  \caption{Signatures based on $\PKP$ for $\lambda = 128$ (sorted by decreasing size)} \label{table:pkp2}
}}}
\end{center}
\end{table}
\vspace{-3\baselineskip}

\subsection{Signatures based on PoK related to the $\SD$ problem} \label{sec:param-sd}

\noindent Table~\ref{table:param1} compares signature sizes of our new protocol with respect to existing ones.
One can see that our signatures scale with a factor $1.5n \cdot \tau$ which bring an improvement with respect to previous (comparable) schemes that scale with a factor $2n \cdot \tau$.
Table~\ref{table:sd-ft} provides a comparison to other code-based signatures constructed from PoK for the $\SD$ problem.
Parameters are chosen taking into account attacks from \cite{BJMM12} and \cite{Sen11}.
For our protocol from Section~\ref{sec:pok-sd1}, we have used $(n = 1190, k = 595, \omega = 132)$ as well as $(N = 8, \tau = 49, M' = 187)$ and $(N = 32, \tau = 28, M' = 389)$.
In these cases, $M'$ is the parameter related to the beating parallel repetition from \cite{KKW18} namely $M'$ instances are prepared during the preprocessing step amongst which $\tau$ are actually executed.
For our protocol from Section~\ref{sec:pok-sd2}, we have used $(n = 1306, k = 653, \omega = 132, \Delta = 17)$ as well as $(N = 32, \tau = 42, M = 22)$ and $(N = 256, \tau = 33, M = 12)$.
Numbers for \cite{FJR22} are from the original paper while numbers for \cite{FJR21} have been recomputed using the aforementioned parameters in order to provide a fair comparison.

\begin{table}[h]
\begin{center}
{\setlength{\tabcolsep}{0.4em}
{\renewcommand{\arraystretch}{1.6}
{\scriptsize
  \begin{tabular}{|l|l|}
    \cline{2-2}
    \multicolumn{1}{c|}{} & \multicolumn{1}{c|}{Signature size} \\ 
    \hline 
      
    \cite{BGKM22}                         & $2 \lambda + \tau \cdot (2n + 3 \lambda \cdot \log_2(N) + 3 \lambda \cdot \log_2(M/\tau))$ \\ \hline
    \cite{FJR21}                          & $2 \lambda + \tau \cdot (2n + \lambda \cdot \log_2(N) + 2\lambda + 3 \lambda \cdot \log_2(M/\tau))$ \\ \hline
    Our Work (Section \ref{sec:pok-sd1})  & $3 \lambda + \tau \cdot (1.5n               + \lambda \cdot \log_2(N) + 2\lambda + 3 \lambda \cdot \log_2(M/\tau))$ \\ \hline
    Our Work (Section \ref{sec:pok-sd2})  & $5 \lambda + \tau \cdot (1.5n + \lambda \cdot \log_2(N) + 2\lambda)$ \\ \hline 
  \end{tabular}
  \vspace{0.5\baselineskip}
  \caption{Signature sizes (sorted by decreasing size)} \label{table:param1}
}}}
\end{center}
\end{table}

\vspace{-7\baselineskip}

\begin{table}[H]
\begin{center}
{\setlength{\tabcolsep}{0.4em}
{\renewcommand{\arraystretch}{1.6}
{\scriptsize
  \begin{tabular}{|l|c|c|c|c|l|}
    \cline{2-6}
    \multicolumn{1}{c|}{}  & Type & $\pk$ & $\sigma$ & Structure & Security Assumption \\ \hline

    \multirow{2}{*}{\cite{BGKM22}}                         & Fast  & 0.1 kB & 26.4 kB & \multirow{2}{*}{3-round} & \multirow{2}{*}{$\SD$ over $\Ft$} \\ \cline{2-4}
                                                           & Short & 0.1 kB & 20.5 kB &                          & \\ \hline

    \multirow{2}{*}{\cite{FJR21}}                          & Fast  & 0.1 kB & 23.3 kB & \multirow{2}{*}{3-round} & \multirow{2}{*}{$\SD$ over $\Ft$} \\ \cline{2-4}
                                                           & Short & 0.1 kB & 16.9 kB &                          & \\ \hline

    \multirow{2}{*}{Our Work (Section \ref{sec:pok-sd1})}  & Fast  & 0.1 kB & 19.6 kB & \multirow{2}{*}{3-round} & \multirow{2}{*}{$\SD$ over $\Ft$} \\ \cline{2-4}
                                                           & Short & 0.1 kB & 14.8 kB &                          & \\ \hline

    \multirow{2}{*}{Our Work (Section \ref{sec:pok-sd2})}  & Fast  & 1.8 kB & 15.1 kB & \multirow{2}{*}{5-round} & \multirow{2}{*}{$\QCSD$ over $\Ft$} \\ \cline{2-4}
                                                           & Short & 1.0 kB & 13.5 kB &                          & \\ \hline

    \multirow{4}{*}{\cite{FJR22}}                          & Fast  & 0.1 kB & 17.0 kB & \multirow{2}{*}{5-round} & \multirow{2}{*}{$\SD$ over $\Ft$} \\ \cline{2-4}
                                                           & Short & 0.1 kB & 11.8 kB &                          & \\ \cline{2-6}

                                                           & Fast  & 0.2 kB & 11.5 kB & \multirow{2}{*}{5-round} & \multirow{2}{*}{$\SD$ over $\Fq$} \\ \cline{2-4}
                                                           & Short & 0.2 kB &  8.3 kB &                          & \\ \hline
  \end{tabular}
  \vspace{0.5\baselineskip}
  \caption{Signatures based on $\SD$ for $\lambda = 128$ (sorted by decreasing size)} \label{table:sd-ft}
}}}
\end{center}
\end{table}

\vspace{-3\baselineskip}

\subsection{Signatures based on PoK related to the $\RSD$/$\RSL$ problem} \label{sec:param-rsd}

Parameters for our PoK based on the rank metric are chosen to resist best known attacks against $\RSD$ \cite{RSD-attack, BBC20} and $\RSL$ \cite{BB21, GHPT17, DT18}.
For our protocol from Section~\ref{sec:pok-sd1}, we have used $(q = 2, m = 31, n = 30, k = 15, \omega = 9)$ as well as $(N = 8, \tau = 49, M' = 187)$ and $(N = 32, \tau = 28, M' = 389)$.
For our protocol from Section~\ref{sec:pok-rsd2}, we have used $(q = 2, m = 37, n = 34, k = 17, \omega = 9, \Delta = 10)$ as well as $(N = 32, \tau = 37)$ and $(N = 512, \tau = 25)$ for the variant relying on the $\IRSD$ problem.
In addition, we have used $(q = 2, m = 37, n = 34, k = 17, \omega = 10, \Delta = 40, M = 5)$ as well as $(N = 64, \tau = 23)$ and $(N = 1024, \tau = 14)$ for the variant relying on the $\IRSL$ problem.

\vspace{-\baselineskip}

\begin{table}[H]
\begin{center}
{\setlength{\tabcolsep}{0.4em}
{\renewcommand{\arraystretch}{1.6}
{\scriptsize
  \begin{tabular}{|l|l|}
    \cline{2-2}
    \multicolumn{1}{c|}{} & \multicolumn{1}{c|}{Signature size} \\ 
    \hline 
    Our Work (Section \ref{sec:pok-sd1})  & $3 \lambda + \tau \cdot (mn + \omega(m + n - \omega) + \lambda \cdot \log_2(N) + 2\lambda + 3 \lambda \cdot \log_2(M/\tau))$ \\ \hline
    Our Work (Section \ref{sec:pok-rsd2}) & $5 \lambda + \tau \cdot (mn + \omega(m + n - \omega) + \lambda \cdot \log_2(N) + 2\lambda)$ \\ \hline 
  \end{tabular}
  \vspace{0.5\baselineskip}
  \caption{Signature sizes for our $\RSD$ based constructions} \label{table:rsd1}
}}}
\end{center}
\end{table}

\vspace{-3\baselineskip}

\begin{table}[H]
\begin{center}
{\setlength{\tabcolsep}{0.4em}
{\renewcommand{\arraystretch}{1.6}
{\scriptsize
  \begin{tabular}{|l|c|c|c|c|l|}
    \cline{2-6}
    \multicolumn{1}{c|}{} & Type & $\pk$ & $\sigma$ & Structure & Security Assumption \\ \hline

    \cite{RankAGS19}                                       & -     & 0.2 kB & 22.5 kB  & 5-round                  & $\IRSD$ \\ \hline

    \multirow{2}{*}{Our Work (Section \ref{sec:pok-sd1})}  & Fast  & 0.1 kB  & 17.2 kB & \multirow{2}{*}{3-round} & \multirow{2}{*}{$\RSD$} \\ \cline{2-4}
                                                           & Short & 0.1 kB  & 13.5 kB &                          & \\ \hline

    \multirow{4}{*}{Our Work (Section \ref{sec:pok-rsd2})} & Fast  & 0.1 kB  & 12.6 kB & \multirow{2}{*}{5-round} & \multirow{2}{*}{$\IRSD$} \\ \cline{2-4}
                                                           & Short & 0.1 kB  & 10.2 kB &                          & \\ \cline{2-6}

                                                           & Fast  & 0.5 kB  & 8.4 kB & \multirow{2}{*}{5-round} & \multirow{2}{*}{$\IRSL$} \\ \cline{2-4}
                                                           & Short & 0.5 kB  & 6.1 kB &                          & \\ \hline

  \end{tabular}
  \vspace{0.5\baselineskip}
  \caption{Signatures based on $\RSD$ for $\lambda = 128$ (sorted by decreasing size)} \label{table:rsd2}
}}}
\end{center}
\end{table}



\section{Conclusion} \label{sec:conclusion}

In this paper, we have introduced a new approach to design PoK along with its associated amplification technique.
Using this new paradigm, we have provided new post-quantum signatures related to the $\PKP$, $\SD$ and $\RSD$ problems.
Our signature related to the $\PKP$ problem features a (public key + signature) size ranging from 9kB to 10kB which is up to $45$\% shorter than existing ones.
Our signature related to the $\SD$ problem features a (public key + signature) size ranging from 15kB to 17kB which outperforms existing constructions such as Wave \cite{wave19} and LESS \cite{less-fm} but is outperformed by \cite{FJR22}.
Our signature related to the $\RSL$ problem has a (public key + signature) size ranging from 7kB to 9kB which outperforms Durandal \cite{Durandal}.
One should nonetheless note that Wave and Durandal have smaller signature sizes (but bigger public key sizes) than our schemes.
These constructions are interesting as they are also competitive with SPHINCS+ \cite{sphincs} that have been recently selected during the NIST Standardization Process.
While the MPC-in-the-head approach have opened the way to several trade-offs between signature size and performances, our work extend these possibilities even more by leveraging structured versions of the considered hard problems.
These new trade-offs are significant as they can lead to shorter signatures as demonstrated in this work.
Future work will include applying our new approach to other hard problems such as the $\MQ$ problem and $\SD$ over $\Fq$ one (see preliminary versions in Appendices~\ref{app:extra1} and \ref{app:extra2}).

\bibliographystyle{alpha}
\bibliography{ref}
\appendix

\section{Proof of Theorem~\ref{thm:pok-sd1}} \label{app:pok-sd1}

\textbf{Theorem 1.}
\emph{If the hash function used is collision-resistant and if the commitment scheme used is binding and hiding, then the protocol depicted in Figure~\ref{fig:pok4-h} is an honest-verifier zero-knowledge PoK with Helper for the $\SD$ problem over $\Ft$ with soundness error $1/N$.}

\begin{proof} We prove the correctness, special soundness and special honest-verifier zero-knowledge properties below.

\vspace{\baselineskip}
\noindent \textbf{Correctness.}
The correctness follows from the protocol description once the cut-and-choose with meet in the middle property $\bar{\bm{s}}_{\alpha} = \bar{\bm{t}}_{\alpha} + \bm{z}_4$ has been verified.
  From $\bm{s}_0 = \bm{u} + \bm{x}$ and $\bm{s}_i = \pi_i[\bm{s}_{i - 1}] + \bm{v}_i$ for all $i \in \intoneto{\alpha}$, one can see that $\bar{\bm{s}}_{\alpha} = \pi_{\alpha} \circ \cdots \circ \pi_{1}[\bm{u} + \bm{x}] + \bm{v}_{\alpha} + \sum\nolimits_{i \in \intoneto{\alpha - 1}} \pi_\alpha \circ \cdots \circ \pi_{i + 1}[\bm{v}_i]$.
  In addition, from $\bar{\bm{t}}_N = \pi[\bm{u}] + \bm{v}$, and $\bar{\bm{t}}_{i-1} = \pi^{-1}_i[\bar{\bm{t}}_i - \bm{v}_i]$ for all $i \in \{N, \ldots, \alpha+1 \}$, one can see that $\bar{\bm{t}}_{\alpha} = \pi_{\alpha} \circ \cdots \circ \pi_{1}[\bm{u}] + \bm{v}_{\alpha} + \sum\nolimits_{i \in \intoneto{\alpha - 1}} \pi_\alpha \circ \cdots \circ \pi_{i + 1}[\bm{v}_i]$.
As $\bm{z}_4 = \pi_{\alpha} \circ \dots \circ \pi_1[\bm{x}]$, one can conclude that $\bar{\bm{s}}_{\alpha} = \bar{\bm{t}}_{\alpha} + \bm{z}_4$.

\vspace{\baselineskip}
\noindent \textbf{Special soundness.} 
  To prove the special soundness, one need to build an efficient knowledge extractor $\Ext$ which returns a solution of the $\SD$ instance defined by $(\bm{H}, \bm{y})$ given two valid transcripts $(\bm{H}, \bm{y}, \com_1, \com_2, \alpha, \allowbreak \rsp)$ and $(\bm{H}, \bm{y}, \com_1, \allowbreak \com_2, \alpha', \rsp')$ with $\alpha \neq \alpha'$  where $\com_1 = \setup(\theta, \xi)$ for some random seeds ($\theta, \xi)$.
The knowledge extractor $\Ext$ computes the solution as:

\vspace{0.5\baselineskip} 
\pseudocode{%
  \tx{1. Compute } (\pi_i)_{i \in \intoneto{n}} \tx{ from } z_2 \tx{ and } z_2' \\
  \tx{2. Output } (\pi_{1}^{-1} \circ \cdots \circ \pi_{\alpha}^{-1}[\bm{z}_4])
}
\vspace{0.5\baselineskip} 

\noindent We now show that the output is a solution to the given $\SD$ problem.
One can compute $(\bar{\pi}_i, \bar{\bm{v}}_i)_{i \in \intoneto{N}}$ from $z_2$ and $z_2'$.
From the binding property of the commitments $(\com_{1,i})_{i \in \intoneto{N}}$, one has $(\pi_i, \bm{v}_i)_{i \in \intoneto{N}} = (\bar{\pi}_i, \bar{\bm{v}}_i)_{i \in \intoneto{N}}$.
From the binding property of commitment $\com_1$, one has $\bm{H}(\bm{z}_1 - \bm{u}) = \bm{y}$ and $\bar{\bm{t}}_N = \pi[\bm{u}] + \bm{v}$.
Using $\bar{\bm{t}}_N$ and $(\pi_i, \bm{v}_i)_{i \in \intoneto{N}}$, one has $\bar{\bm{t}}_{\alpha} = \pi_{\alpha} \circ \cdots \circ \pi_{1}[\bm{u}] + \bm{v}_{\alpha} + \sum\nolimits_{i \in \intoneto{\alpha - 1}} \pi_\alpha \circ \cdots \circ \pi_{i + 1}[\bm{v}_i]$.
From the binding property of commitment $\com_2$, one has $\bar{\bm{s}}_0 = \bar{\bm{s}}_0' = \bm{z}_1$.
In addition, one has $\bar{\bm{s}}_i = \bar{\pi}_i[\bar{\bm{s}}_{i - 1}] + \bar{\bm{v}}_i$ for all $i \in \intoneto{N} \setminus \alpha$ as well as $\bar{\bm{s}}_i' = \bar{\pi}_i[\bar{\bm{s}}_{i - 1}'] + \bar{\bm{v}}_i$ for all $i \in \intoneto{N} \setminus \alpha'$.
  Using the binding property of commitment $\com_2$ once again, one can deduce that $\bar{\bm{s}}_i = \bar{\pi}_i[\bar{\bm{s}}_{i - 1}] + \bar{\bm{v}}_i$ for all $i \in \intoneto{N}$ hence $\bar{\bm{s}}_{\alpha} = \pi_{\alpha} \circ \cdots \circ \pi_{1}[\bm{z}_1] + \bm{v}_{\alpha} + \sum\nolimits_{i \in \intoneto{\alpha - 1}} \pi_{\alpha} \circ \cdots \circ \pi_{i + 1}[\bm{v}_i]$.
From the binding property of commitment $\com_2$, one has $\bar{\bm{s}}_{\alpha} = \bar{\bm{t}}_{\alpha} + \bm{z}_4$ hence $\bm{z}_1 - \bm{u} = \pi^{-1}_{1} \circ \cdots \circ \pi^{-1}_{\alpha}[\bm{z}_4]$.
As a consequence, one has $\bm{H}(\pi^{-1}_{1} \circ \cdots \circ \pi^{-1}_{\alpha}[\bm{z}_4]) = \bm{y}$ along with $\hw{\bm{z}_4} = \omega$ thus $\pi^{-1}_{1} \circ \cdots \circ \pi^{-1}_{\alpha}[\bm{z}_4]$ is a solution of the considered $\SD$ problem instance.

\vspace{\baselineskip}
\noindent \textbf{Special Honest-Verifier Zero-Knowledge.} 
We start by explaining why valid transcripts do not leak anything on the secret $\bm{x}$.
  A valid transcript contains $(\bm{u} + \bm{x}, \, (\pi_i, \bm{v}_i)_{i \in \intoneto{N} \setminus \alpha}, \, \pi[\bm{u}] + \bm{v}, \, \pi_{\alpha} \circ \cdots \circ \pi_{1}[\bm{x}], \com_{1, \alpha})$ namely the secret $\bm{x}$ is masked either by a random value $\bm{u}$ or by a random permutation $\pi_{\alpha}$.
  The main difficulty concerns the permutation $\pi_\alpha$ as the protocol requires $\pi_{\alpha} \circ \cdots \circ \pi_1[\bm{u} + \bm{x}]$ to be computed while both $(\bm{u} + \bm{x})$ and $(\pi_i)_{i \in \intoneto{\alpha-1}}$ are known.
  To overcome this issue, the protocol actually computes $\pi_{\alpha} \circ \cdots \circ \pi_1[\bm{u} + \bm{x}] + \bm{v}_{\alpha} + \sum\nolimits_{i \in \intoneto{\alpha - 1}} \pi_\alpha \circ \cdots \circ \pi_{i + 1}[\bm{v}_i]$ for some random value $\bm{v}_{\alpha}$ hence does not leak anything on $\pi_{\alpha}$.
  In addition, if the commitment used is hiding, $\com_{1, \alpha}$ does not leak anything on $\pi_{\alpha}$ nor $\bm{v}_{\alpha}$.
  Formally, one can build a $\ppt$ simulator $\Sim$ that given the public values $(\bm{H}, \bm{y})$, random seeds $(\theta, \xi)$ and a random challenge $\alpha$ outputs a transcript $(\bm{H}, \bm{y}, \com_1, \com_2, \alpha, \rsp)$ such that $\com_1 = \setup(\theta, \xi)$ that is indistinguishable from the transcript of honest executions of the protocol:

\vspace{0.5\baselineskip}
\pseudocode{%
  \tx{1. Compute } (\pi_i, \bm{v}_i)_{i \in \intoneto{N}} \tx{ and } \bm{u} \tx{ from } (\theta, \xi) \\
  \tx{2. Compute } \bm{\tilde{x}}_1 \tx{ such that } \bm{H} \bm{\tilde{x}}_1 = \bm{\bm{y}} \tx{ and } \bm{\tilde{x}}_2 \sampler \swset{\omega}{\Ftn} \\
  \tx{3. Compute } \bm{\tilde{s}}_0 = \bm{u} + \bm{\tilde{x}}_1 \tx { and } \bm{\tilde{s}}_i = \pi_i[\tilde{\bm{s}}_{i-1}] + \bm{v}_i \tx{ for all } i \in \intoneto{\alpha-1} \\
  \tx{4. Compute } \bm{\tilde{s}}_{\alpha} = \pi_{\alpha} \circ \cdots \circ \pi_1[\bm{u} + \bm{\tilde{x}}_2] + \bm{v}_{\alpha} + \sum\nolimits_{i \in \intoneto{\alpha - 1}} \pi_\alpha \circ \cdots \circ \pi_{i + 1}[\bm{v}_i] \\
  \tx{5. Compute } \bm{\tilde{s}}_i = \pi_i[\tilde{\bm{s}}_{i-1}] + \bm{v}_i \tx{ for all } i \in [\alpha+1,N] \\
  \tx{6. Compute } \tilde{\com}_2 = \commitb{\bm{u} + \bm{\tilde{x}}_1 \, || \, (\bm{\tilde{s}}_i)_{i \in \intoneto{N}}} \\
  \tx{7. Compute } \bm{\tilde{z}}_1 = \bm{u} + \bm{\tilde{x}}_1, ~ z_2 = (\theta_{i})_{i \in \intoneto{N} \setminus \alpha}, ~ z_3 = \xi, ~ \bm{z}_4 = \pi_{\alpha} \circ \dots \circ \pi_1[\bm{\tilde{x}}_2] \\
  \tx{8. Compute  } \tilde{\rsp} = (\bm{\tilde{z}}_1, z_2, z_3, \bm{\tilde{z}}_4, \com_{1, \alpha}) \tx{ and output } (\bm{H}, \bm{y}, \com_1, \tilde{\com}_2, \alpha, \tilde{\rsp}) 
}
\vspace{0.5\baselineskip}

\noindent The transcript generated by the simulator $\Sim$ is $(\bm{H}, \bm{y}, \com_1, \tilde{\com}_2, \alpha, \tilde{\rsp})$ where $\com_1 \samplen \setup(\theta, \xi)$.
Since $\bm{\tilde{x}}_1$ and $\bm{x}$ are masked by a random mask $\bm{u}$ unknown to the verifier, $\bm{\tilde{z}}_1$ and $\bm{z}_1$ are indistinguishable.
Similarly, since $\bm{\tilde{x}}_2$ and $\bm{x}$ have the same Hamming weight and are masked by a random permutation $\pi_{\alpha}$ unknown to the verifier, $\bm{\tilde{z}}_4$ and $\bm{z}_4$ are indistinguishable.
As $\bm{\tilde{z}}_1$ and $\bm{z}_1$ are indistinguishable, $\bm{\tilde{s}}_i$ and $\bm{s}_i$ are also indistinguishable for all $i \in \intoneto{\alpha - 1}$.
Since $\bm{\tilde{s}}_\alpha$ and $\bm{s}_\alpha$ both contains a random mask $\bm{v}_\alpha$ unknown to the verifier, they are indistinguishable.
As $\bm{\tilde{s}}_\alpha$ and $\bm{s}_\alpha$ are indistinguishable, so do $\bm{\tilde{s}}_i$ and $\bm{s}_i$ for all $i \in [\alpha+1, N]$.
Finally, $z_2$ and $z_3$ are identical in both cases and $\com_{1, \alpha}$ does not leak anything if the commitment is hiding.
As a consequence, $(\tilde{\rsp}, \tilde{\com}_2)$ in the simulation and $(\rsp, \com_2)$ in the real execution are indistinguishable.
Finally, $\Sim$ runs in polynomial time which completes the proof.
\end{proof}

\section{Proof of Theorem~\ref{thm:pok-pkp2}} \label{app:pok-pkp2}

\vspace{\baselineskip}
\noindent \textbf{Theorem 2.}
\emph{If the hash function used is collision-resistant and if the commitment scheme used is binding and hiding, then the protocol depicted in Figure~\ref{fig:pkp} is an honest-verifier zero-knowledge PoK for the $\IPKP$ problem with soundness error equal to $\frac{1}{N} + \frac{N - 1}{N \cdot (q - 1)}$.}

\begin{proof} 
We prove the correctness, special soundness and special honest-verifier zero-knowledge properties below.

\vspace{\baselineskip}
\noindent \textbf{Correctness.}
The correctness follows from the protocol description once it is observed that $\bm{s}_N = \pi[\kappa \cdot \bm{x}] + \bm{v}$ which implies that $\bm{H} \bm{s}_N - \kappa \cdot \bm{y} = \bm{H} \pi[\kappa \cdot \bm{x}] + \bm{H} \bm{v} - \kappa \cdot \bm{y} = \bm{H} \bm{v}$.

\vspace{\baselineskip}
\noindent \textbf{$(q-1,N)$-special soundness.}
  To prove the $(q-1,N)$-special soundness, one need to build an efficient knowledge extractor $\Ext$ which returns a solution of the $\IPKP$ instance defined by $(\bm{H}, \bm{x}, \bm{y})$ with high probability given a $(q-1,N)$-tree of accepting transcripts.
  One only need a subset of the tree to complete the proof namely the four leafs corresponding to challenges $(\kappa, \alpha_1), (\kappa, \alpha_2), (\kappa', \alpha_1)$ and $(\kappa', \alpha_2)$ where $\kappa \neq \kappa'$ and $\alpha_1 \neq \alpha_2$. 
The knowledge extractor $\Ext$ computes the solution as:

\vspace{0.5\baselineskip} 
\pseudocode{%
  \tx{1. Compute } (\bar{\pi}_i)_{i \in \intoneto{n}} \tx{ from } z_2^{(\kappa, \alpha_1)} \tx{ and } z_2^{(\kappa, \alpha_2)} \\
  \tx{2. Compute } \bar{\pi} = \bar{\pi}_N \circ \cdots \circ \bar{\pi}_1 \\
  \tx{3. Output } \bar{\pi} 
}
\vspace{0.5\baselineskip} 

\noindent One can compute $(\bar{\pi}^{(\kappa)}_i, \bar{\bm{v}}^{(\kappa)}_i)_{i \in \intoneto{N}}$ and $(\bar{\pi}^{(\kappa')}_i, \bar{\bm{v}}^{(\kappa')}_i)_{i \in \intoneto{N}}$ from $\big( z_2^{(\kappa, \alpha_i)} \big)_{i \in [1,2]}$ and  $\big( z_2^{(\kappa', \alpha_i)} \big)_{i \in [1,2]}$ respectively.
  From the binding property of the commitments $(\com_{1,i})_{i \in \intoneto{N}}$, one has $(\bar{\pi}_i, \bar{\bm{v}}_i)_{i \in \intoneto{N}} = (\bar{\pi}^{(\kappa)}_i, \bar{\bm{v}}^{(\kappa)}_i)_{i \in \intoneto{N}} = (\bar{\pi}^{(\kappa')}_i, \bar{\bm{v}}^{(\kappa')}_i)_{i \in \intoneto{N}}$.
By construction, one has $\bar{\bm{s}}^{(\kappa, \alpha_1)}_0 = \bar{\bm{s}}^{(\kappa, \alpha_2)}_0 = \kappa \cdot \bm{x}$.
In addition, one has $\bar{\bm{s}}^{(\kappa, \alpha_1)}_i = \bar{\pi}_i[\bar{\bm{s}}^{(\kappa, \alpha_1)}_{i - 1}] + \bar{\bm{v}}_i$ for all $i \in \intoneto{N} \setminus \alpha_1$ as well as $\bar{\bm{s}}^{(\kappa, \alpha_2)}_i = \bar{\pi}_i[\bar{\bm{s}}^{(\kappa, \alpha_2)}_{i - 1}] + \bar{\bm{v}}_i$ for all $i \in \intoneto{N} \setminus \alpha_2$.
From the binding property of commitment $\com_2$, one can deduce that $\bar{\bm{s}}^{(\kappa)}_i = \bar{\pi}_i[\bar{\bm{s}}^{(\kappa)}_{i - 1}] + \bar{\bm{v}}_i$ for all $i \in \intoneto{N}$ hence $\bar{\bm{s}}^{(\kappa)}_N = \bar{\pi}[\kappa \cdot \bm{x}] + \bar{\bm{v}}$.
Following a similar argument, one also has $\bar{\bm{s}}^{(\kappa')}_N = \bar{\pi}[\kappa' \cdot \bm{x}] + \bar{\bm{v}}$.
From the binding property of commitment $\com_1$, one has $\bm{H} \bar{\bm{s}}^{(\kappa)}_N - \kappa \cdot \bm{y} = \bm{H} \bar{\bm{s}}^{(\kappa')}_N - \kappa' \cdot \bm{y}$.
It follows that $\bm{H}(\bar{\pi}[\kappa \cdot \bm{x}] + \bar{\bm{v}}) - \kappa \cdot \bm{y} = \bm{H}(\bar{\pi}[\kappa' \cdot \bm{x}] + \bar{\bm{v}}) - \kappa' \cdot \bm{y}$ hence $(\kappa - \kappa') \cdot \bm{H} \bar{\pi}[\bm{x}] = (\kappa - \kappa') \cdot \bm{y}$.
This implies that $\bm{H} \bar{\pi}[\bm{x}] = \bm{y}$ thus $\bar{\pi}$ is a solution of the considered $\IPKP$ problem.

\vspace{\baselineskip}
\noindent \textbf{Special Honest-Verifier Zero-Knowledge.} 
We start by explaining why valid transcripts do not leak anything on the secret $\pi$.
A valid transcript contains $(\bm{s}_{\alpha}, \, (\pi_i, \bm{v}_i)_{i \in \intoneto{N} \setminus \alpha}, \, \com_{1, \alpha})$ where the secret $\pi$ is hiden by the unknown permutation $\pi_{\alpha}$.
In our protocol, one need to compute $\pi[\bm{x}]$ without leaking anything on the secret $\pi$.
To overcome this issue, the protocol actually computes $\pi[\bm{x}] + \bm{v}$ for some value $\bm{v}$ that is masked by the unknown random value $\bm{v}_\alpha$.
In addition, if the commitment used is hiding, $\com_{1, \alpha}$ does not leak anything on $\pi_{\alpha}$ nor $\bm{v}_{\alpha}$.
Formally, one can build a $\ppt$ simulator $\Sim$ that given the public values $(\bm{H}, \bm{x}, \bm{y})$, random challenges $(\kappa, \alpha)$ outputs a transcript $(\bm{H}, \bm{x}, \bm{y}, \com_1, \kappa, \com_2, \alpha, \rsp)$ that is indistinguishable from the transcript of honest executions of the protocol:

\vspace{0.5\baselineskip}
\pseudocode{%
  \tx{1. Compute } (\pi_i, \bm{v}_i, \tilde{\com}_{1,i}) \tx{ as in the real protocol except for } \tilde{\pi}_1 \sampler S_n \\
  \tx{2. Compute } \tilde{\pi} = \pi_N \circ \cdots \tilde{\pi}_1 \\
  \tx{3. Compute } \bm{v} \tx{ and } \tilde{\com_1} \tx{ as in the real protocol} \\
  \tx{4. Compute } \bm{\tilde{x}} \tx{ such that } \bm{H} \bm{\tilde{x}} = \kappa \cdot \bm{\bm{y}} \\
  \tx{5. Compute } \bm{s}_0 = \kappa \cdot \bm{x} \tx { and } \bm{\tilde{s}}_i = \pi_i[\bm{\tilde{s}}_{i-1}] + \bm{v}_i \tx{ for all } i \in \intoneto{\alpha-1} \\
  \tx{6. Compute } \bm{\tilde{s}}_{\alpha} = \pi_{\alpha}[\bm{\tilde{s}}_{\alpha - 1}] + \bm{v}_{\alpha} + \pi^{-1}_{\alpha + 1} \circ \cdots \circ \pi^{-1}_{N}[\bm{\tilde{x}} - \pi[\kappa \cdot \bm{x}]] \\
  \tx{7. Compute } \bm{\tilde{s}}_i = \pi_i[\bm{\tilde{s}}_{i-1}] + \bm{v}_i \tx{ for all } i \in [\alpha+1,N] \\
  \tx{8. Compute } \tilde{\com}_2 = \commitb{(\bm{\tilde{s}}_i)_{i \in \intoneto{N}}} \tx{ and } \bm{\tilde{z}}_1 = \bm{\tilde{s}}_{_\alpha} \\
  \tx{9. Compute } \tilde{z}_2 = \tilde{\pi}_1 \, || \, (\theta_{i})_{i \in \intoneto{N} \setminus \alpha} \text{ if } \alpha \neq 1 \tx{ or } \bar{z}_2 = (\theta_{i})_{i \in \intoneto{N} \setminus \alpha} \tx{ otherwise} \\
  \tx{10. Compute  } \tilde{\rsp} = (\bm{\tilde{z}}_1, \tilde{z}_2, \tilde{\com}_{1, \alpha}) \tx{ and output } (\bm{H}, \bm{x}, \bm{y}, \tilde{\com}_1, \kappa, \tilde{\com}_2, \alpha, \tilde{\rsp}) 
}
\vspace{0.5\baselineskip}

\noindent The transcript generated by the simulator $\Sim$ is $(\bm{H}, \bm{x}, \bm{y}, \tilde{\com}_1, \kappa, \tilde{\com}_2, \alpha, \tilde{\rsp})$.
Since $\bm{\tilde{s}}_{\alpha}$ (in the simulation) and $\bm{s}_{\alpha}$ (in the real world) are masked by a random mask $\bm{v}_{\alpha}$ unknown to the verifier, $\bm{\tilde{z}}_1$ and $\bm{z}_1$ are indistinguishable.
In addition, since $\tilde{\pi}_1$ is sampled uniformly at random in $\hperm{n}$, $\tilde{z}_2$ and $z_2$ are indistinguishable.
Finally, $\tilde{\com}_{1, \alpha}$ does not leak anything on $\pi_{\alpha}$ nor $\bm{v}_{\alpha}$ if the commitment is hiding.
As a consequence, $(\tilde{\com}_1, \tilde{\com}_2, \tilde{\rsp})$ (in the simulation) and $(\com_1, \com_2, \rsp)$ (in the real execution) are indistinguishable.
Finally, $\Sim$ runs in polynomial time which completes the proof.
\end{proof}

\section{Proof of Theorem~\ref{thm:pok-sd2-qc}} \label{app:pok-sd2-qc}

Similarly to what was done in \cite{AGS11}, we introduce the intermediary $\DiffSD$ problem (Definition~\ref{def:diffsd}) in order to prove the security of the protocol depicted in Figure~\ref{fig:pok4-qc}.
Its security (Theorem~\ref{thm:pok-sd2-qc}) relies of the $\DiffSD$ problem and is completed by a reduction from the $\QCSD$ problem to the $\DiffSD$ problem (Theorem~\ref{thm:reduction-diffsd}).
In our context, we consider $\QCSD$ instances with up to $M$ vectors (decoding one out of many setting) which means that the adversary has access to $Mk$ syndromes ($M$ given syndromes combined with $k$ possible shifts). 
In practice, one has to choose the $\QCSD$ parameters so that the PoK remains secure even taking into account both the number of given syndromes as well as the (small) security loss induced by the use of the $\DiffSD$ problem.

\vspace{0.25\baselineskip}
\begin{definition}[$\DiffSD$ problem] \label{def:diffsd}
  Let $(n=2k, k, w, M, \Delta)$ be positive integers, $\bm{H} \in \mathcal{QC}(\Ft^{(n - k) \times n})$ be a random parity-check matrix of a quasi-cyclic code of index $2$, $(\bm{x}_i)_{i \in \intoneto{M}} \in (\Ft^{n})^M$ be vectors such that $\hw{\bm{x}_i} = w$ and $(\bm{y}_i)_{i \in \intoneto{M}} \in (\Ft^{(n-k)})^M$ be vectors such that $\bm{H} \bm{x}_i^\top = \bm{y}_i^\top$. 
  Given $(\bm{H}, (\bm{y}_i)_{i \in \intoneto{M}})$, the Differential Syndrome Decoding problem $\DiffSD(n, k, w, M, \Delta)$ asks to find $(\bm{c}, (\bm{d}_{j}, \kappa_j, \allowbreak \mu_j)_{j \allowbreak \in \intoneto{\Delta}}) \in \Ft^{(n - k)} \times (\Ft^{n} \times \intoneto{k} \times \intoneto{M})^{\Delta}$ such that $\bm{H}\bm{d}_{j}^\top + \bm{c} =\bm{rot}_{\kappa_j}(\bm{y}_{\mu_j}^\top)$ and $\hw{\bm{d}_{j}} = w$ for each $j \in \intoneto{\Delta}$.

\end{definition}

\vspace{0.25\baselineskip}
\begin{theorem} \label{thm:reduction-diffsd}
  If there exists a $\ppt$ algorithm solving the $\DiffSD(n,k, \allowbreak w, M, \Delta)$ problem with probability $\epsilon_{\DiffSD}$, then there exists a $\ppt$ algorithm solving the $\QCSD(n, \allowbreak k, w, M)$ with probability $\epsilon_{\QCSD} \ge (1 - M \times p - (2^{(n - k)} - 2) \times p^{\Delta}) \cdot \epsilon_{\DiffSD}$ where $p = \frac{{n \choose \omega}}{2^{(n - k)}}$.
\end{theorem}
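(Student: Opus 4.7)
The plan is to build a reduction $\mathcal{B}$ which, given a random $\QCSD(n,k,w,M)$ instance $(\bm{H},(\bm{y}_i)_{i\in\intoneto{M}})$, invokes the hypothetical $\DiffSD$ solver $\adv$ on that very instance, parses its output $(\bm{c},(\bm{d}_j,\kappa_j,\mu_j)_j)$, and returns $\rot_{-\kappa_1}(\bm{d}_1)$ as its $\QCSD$ witness (for the syndrome $\bm{y}_{\mu_1}$) exactly when $\bm{c}=\bm{0}$, aborting otherwise. Correctness of the extraction is immediate from the quasi-cyclic commutation $\bm{H}\cdot\rot_r(\bm{u})^\top=\rot_r(\bm{H}\bm{u}^\top)$ recalled in Section~\ref{sec:preliminaries}: when $\bm{c}=\bm{0}$ the $\DiffSD$ equation for $j=1$ reads $\bm{H}\bm{d}_1^\top=\rot_{\kappa_1}(\bm{y}_{\mu_1}^\top)$, so $\bm{H}\,\rot_{-\kappa_1}(\bm{d}_1)^\top=\bm{y}_{\mu_1}^\top$; Hamming weight is rotation-invariant, hence $\hw{\rot_{-\kappa_1}(\bm{d}_1)}=w$ as required.

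With this reduction in place, $\epsilon_{\QCSD}=\Pr[\adv\text{ wins}\wedge\bm{c}=\bm{0}]=\epsilon_{\DiffSD}-\Pr[\adv\text{ wins}\wedge\bm{c}\neq\bm{0}]$, and because $\epsilon_{\DiffSD}\leq 1$ it suffices to establish the additive bound $\Pr[\adv\text{ wins}\wedge\bm{c}\neq\bm{0}]\leq Mp+(2^{n-k}-2)p^{\Delta}$; the stronger-looking multiplicative inequality in the statement then follows from $\epsilon_{\DiffSD}-\delta\geq(1-\delta)\epsilon_{\DiffSD}$. I would prove the additive bound by a union bound over the non-zero value of $\bm{c}$, split into two families. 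The first family lumps together the at most $M$ values $\bm{c}=\bm{y}_{\mu}$ for $\mu\in\intoneto{M}$; for such a $\bm{c}$ the trivial pair $(\kappa,\mu)=(0,\mu)$ reduces the $\DiffSD$ equation to $\bm{H}\bm{d}^\top=\bm{0}$ with $\hw{\bm{d}}=w$, i.e.\ demands a weight-$w$ codeword of the underlying quasi-cyclic code, an event whose density over the random instance is at most $p=\binom{n}{w}/2^{n-k}$ by a standard Gilbert--Varshamov-style counting, contributing $Mp$. The second family consists of the remaining $2^{n-k}-2$ non-zero values, for which each shifted syndrome $\rot_{\kappa_j}(\bm{y}_{\mu_j})-\bm{c}$ is near-uniform in $\Ft^{n-k}$ and admits a weight-$w$ preimage with probability at most $p$, so forcing $\Delta$ compatible tuples contributes $p^{\Delta}$ per value of $\bm{c}$, for a total of $(2^{n-k}-2)p^{\Delta}$.

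The main technical obstacle will be the independence invoked in the second family: the $k$ rotations of a single $\bm{y}_\mu$ are correlated, so one cannot a priori treat the $\Delta$ events ``$\rot_{\kappa_j}(\bm{y}_{\mu_j})-\bm{c}$ decodes at weight $w$'' as independent. The cleanest route I see is to restrict the inner union bound to collections of tuples with pairwise distinct indices $\mu_j$, which is feasible in the regime $M\geq\Delta$ targeted in Section~\ref{sec:param-sd} and which renders the events genuinely independent through the independence of the secret vectors $\bm{x}_\mu$ in the $\QCSD$ distribution. A secondary subtlety, handled exactly as in the $\DiffSD$/$\QCSD$ reduction of \cite{AGS11} on which the present statement is modelled, is the edge case where the random quasi-cyclic code accidentally contains a low-weight codeword; the contribution of this event is absorbed into the first family's $Mp$ factor via the same density estimate. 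Combining the two contributions yields $\epsilon_{\QCSD}\geq\epsilon_{\DiffSD}-Mp-(2^{n-k}-2)p^{\Delta}\geq(1-Mp-(2^{n-k}-2)p^{\Delta})\epsilon_{\DiffSD}$, which is the announced inequality.
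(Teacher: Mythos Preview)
Your reduction $\mathcal{B}$ and the extraction of a $\QCSD$ witness from the case $\bm{c}=\bm{0}$ match the paper. The gap is in your case split for $\bm{c}\neq\bm{0}$: the paper does \emph{not} distinguish ``$\bm{c}\in\{\bm{y}_\mu\}$'' from ``other'', but rather ``$\bm{c}$ is stable under cyclic rotation'' from ``$\bm{c}$ is not''. Under the paper's standing hypothesis that the block length is a prime with $2$ primitive modulo it, the only rotation-stable vectors in $\Ft^{n-k}$ are $\bm{0}$ and the all-ones vector $\bm{1}$. This is the crux: for $\bm{c}=\bm{1}$, a \emph{single} weight-$w$ preimage $\bm{d}$ of $\bm{y}_\mu-\bm{1}$ already yields $k\ge\Delta$ valid $\DiffSD$ tuples $(\rot_\kappa(\bm{d}),\kappa,\mu)$, since $\rot_\kappa(\bm{y}_\mu)-\bm{1}=\rot_\kappa(\bm{y}_\mu-\bm{1})$. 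Hence the probability that $\bm{c}=\bm{1}$ admits a $\DiffSD$ solution is bounded by the probability that \emph{some} $\bm{y}_\mu-\bm{1}$ decodes at weight $w$, i.e.\ at most $Mp$; this is where the $Mp$ term comes from. For the $2^{n-k}-2$ rotation-unstable values no such shortcut is available, whence the $p^\Delta$ factor.

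Your split misses this. You place $\bm{c}=\bm{1}$ in your second family and assign it probability $p^\Delta$, but by the rotation trick its actual contribution is of order $Mp\gg p^\Delta$, so your bound on the second family is not valid at that point. Conversely, the values $\bm{c}=\bm{y}_\mu$ you single out are generically rotation-unstable and carry no special weight; your claim that such a $\bm{c}$ ``demands a weight-$w$ codeword'' is not an upper bound either, because the adversary may assemble its $\Delta$ tuples without ever selecting the pair $(\kappa,\mu)$ that collapses to $\bm{H}\bm{d}^\top=\bm{0}$. The missing idea is exactly the rotation-stability dichotomy together with the algebraic lemma identifying $\{\bm{0},\bm{1}\}$ as its only fixed points. (A small aside: your inequality $\epsilon_{\DiffSD}-\delta\ge(1-\delta)\epsilon_{\DiffSD}$ points the wrong way when $\epsilon_{\DiffSD}<1$; the paper reaches the multiplicative form via a heuristic independence between ``$\adv$ succeeds'' and the event that a non-zero $\bm{c}$ admits a solution, not via the additive bound.)
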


\noindent \emph{Sketch of Proof.}
We start by highlighting the main steps of the proof.
One should note that the $\DiffSD$ problem is constructed from a $\QCSD$ instance and as such always admit at least a solution namely the solution of the underlying $\QCSD$ instance.
Indeed, any solution to the $\DiffSD$ problem satisfying $\bm{c} = (0, \cdots, 0)$ can be transformed into a solution to the $\QCSD$ problem with similar inputs.
Hereafter, we study the probability that there exists solutions to the $\DiffSD$ problem for any possible value of $\bm{c}$.
To do so, we consider two cases depending on wheither $\bm{c}$ is stable by rotation or not.
The first case implies that either $\bm{c} = (0, \cdots, 0)$ or $\bm{c} = (1, \cdots, 1)$ while the second case encompasses every other possible value for $\bm{c}$.
We show that for correctly chosen values $n$, $k$, $w$ and $\Delta$, the probability that there exists solutions to the $\DiffSD$ problem satisfying $\bm{c} \ne (0, \cdots, 0)$ is small.
Such solutions can't be transformed into solutions to the $\QCSD$ problem hence induce a security loss in our reduction.

Given a $[n, k]$ quasi-cyclic code $\mathcal{C}$, we restrict our analysis (and our parameters choice) to the case where (i) $n$ is a primitive prime and (ii) the weight $\omega$ is lower than the Gilbert-Varshamov bound associated to $\mathcal{C}$ \textit{i.e.} the value for which the number of words of weight less or equal to $w$ corresponds to the number of syndromes.
Thus, given a syndrome $\bm{y}$, the probability $p$ that there exists a pre-image $\bm{x}$ of $\bm{y}$ such that $\bm{Hx}^\top = \bm{y}^\top$ and $\hw{\bm{x}} = \omega$ is $p = \binom{n}{w} / 2^{(n-k)}$ namely the number of possible words of weight $\omega$ divided by the number of syndromes.

Let $\mathcal{A}_{\DiffSD}$ be an algorithm that given inputs $(\bm{H}, (\bm{y}_i)_{i \in \intoneto{M}})$ generated following Definition~\ref{def:diffsd} outputs a solution $(\bm{c}, (\bm{d}_j, \kappa_j, \mu_j)_{j \in \intoneto{\Delta}})$ to the considered $\DiffSD$ instance.
Let $\mathcal{A}_{\QCSD}$ be an algorithm that given access to $\mathcal{A}_{\DiffSD}$ and inputs $(\bm{H}, (\bm{y}_i)_{i \in \intoneto{M}})$ corresponding to an instance of the $\QCSD$ problem in the decoding one out of many setting outputs a solution to this instance.
We denote by $\mathcal{A}_{\QCSD}(\bm{H}, (\bm{y}_i)_{i \in \intoneto{M}}) \allowbreak \ne \bot$ (respectively $\mathcal{A}_{\DiffSD}(\bm{H}, (\bm{y}_i)_{i \in \intoneto{M}}) \allowbreak \ne \bot$) the fact that $\mathcal{A}_{\QCSD}$ (respectively $\mathcal{A}_{\DiffSD}$) outputs a \emph{valid} solution to the $\QCSD$ (respectively $\DiffSD$) problem.

\vspace{0.5\baselineskip}
\noindent \underline{$\mathcal{A}_{\QCSD}(\bm{H}, (\bm{y}_i)_{i \in \intoneto{M}})$:} \\
1. Compute $(\bm{c}, (\bm{d}_j, \kappa_j, \mu_j)_{j \in \intoneto{\Delta}}) \leftarrow \mathcal{A}_{\DiffSD}(\bm{H}, (\bm{y}_i)_{i \in \intoneto{M}})$ \\
2. If $\bm{c} = (0, \cdots, 0)$, output $\bm{x} = \rot_{k - \kappa_1}(\bm{d}_1)$ \\
3. If $\bm{c} \neq (0, \cdots, 0)$, output $\bot$
\vspace{0.5\baselineskip}

Let $\mathtt{c_{0}}$ denote the event that the solution to the $\DiffSD$ problem is also the solution of the underlying $\QCSD$ instance.
One has $\epsilon_{\QCSD} = P[\mathcal{A}_{\QCSD}(\bm{H}, (\bm{y}_i)_{i \in \intoneto{M}}) \allowbreak \ne \bot] \ge P[\mathcal{A}_{\QCSD}(\bm{H}, (\bm{y}_i)_{i \in \intoneto{M}}) \allowbreak \ne \bot \, \cap \, \mathtt{c_{0}}] = P[\mathcal{A}_{\DiffSD}(\bm{H}, (\bm{y}_i)_{i \in \intoneto{M}}) \allowbreak \ne \bot \, \cap \, \mathtt{c_{0}}]$.
Let $\mathtt{c_{stable}}$ and $\mathtt{c_{unstable}}$ denote the events that the $\DiffSD$ problem admits another solution than the one of its underlying $\QCSD$ instance where $\bm{c}$ is stable (respectively unstable) by rotation.
One has $P[\mathcal{A}_{\DiffSD}(\bm{H}, (\bm{y}_i)_{i \in \intoneto{M}}) \ne \bot] = P[\mathcal{A}_{\DiffSD}(\bm{H}, (\bm{y}_i)_{i \in \intoneto{M}}) \ne \bot \cap \mathtt{c_{stable}}] + P[\mathcal{A}_{\DiffSD}(\bm{H}, (\bm{y}_i)_{i \in \intoneto{M}}) \ne \bot \cap \mathtt{c_{unstable}}]$.
We show bellow that if $\bm{c}$ is stable by rotation then $\bm{c} = (0, \cdots, 0)$ or $\bm{c} = (1, \cdots, 1)$.
Let $\mathtt{c_{1}}$ denote the event that the $\DiffSD$ problem admits another solution than the one of its underlying $\QCSD$ instance where $\bm{c} = (1, \cdots, 1)$. 
It follows that $P[\mathcal{A}_{\DiffSD}(\bm{H}, (\bm{y}_i)_{i \in \intoneto{M}}) \ne \bot \, \cap \, \mathtt{c_{0}}] = P[\mathcal{A}_{\DiffSD}(\bm{H}, (\bm{y}_i)_{i \in \intoneto{M}}) \ne \bot] - \allowbreak P[\mathcal{A}_{\DiffSD}(\bm{H}, \allowbreak (\bm{y}_i)_{i \in \intoneto{M}}) \ne \bot \, \cap \, \mathtt{c_{1}}] - P[\mathcal{A}_{\DiffSD}(\bm{H}, (\bm{y}_i)_{i \in \intoneto{M}}) \ne \bot \, \cap \, \mathtt{c_{unstable}}]$ hence $P[\mathcal{A}_{\DiffSD}(\bm{H}, (\bm{y}_i)_{i \in \intoneto{M}}) \ne \bot \, \cap \, \mathtt{c_{0}}]  = \epsilon_{\DiffSD} - P[\mathcal{A}_{\DiffSD}(\bm{H}, (\bm{y}_i)_{i \in \intoneto{M}}) \ne \bot \cap \mathtt{c_{1}}] - P[\mathcal{A}_{\DiffSD}(\bm{H}, (\bm{y}_i)_{i \in \intoneto{M}}) \allowbreak \ne \bot \cap \mathtt{c_{unstable}}] = (1 - P[\mathtt{c_1}] - P[\mathtt{c_{unstable}}]) \cdot \epsilon_{\DiffSD}$.
It follows that $\epsilon_{\QCSD} \ge (1 - P[\mathtt{c_1}] - P[\mathtt{c_{unstable}}]) \cdot \epsilon_{\DiffSD}$.

Working modulo $x^n-1$ and writing $\bm{c}$ as $c(x)=\sum_{i=0}^{n-1} c_ix^i$ being stable by rotation of order $j$ implies $x^jc(x)=c(x) \mod {x^n-1}$ hence $(x^j+1)c(x)=0 \mod {x^n-1}$. In our case where $2$ is primitive modulo $n$, one has $x^n-1=(x - 1)(1 + x + x^2 + \cdots + x^{n-1})$ where $(1 + x + x^2 + \cdots + x^{n-1})$ is an irreducible polynomial \cite{GZ08}.
Since $c(x)$ divides $x^n-1$ and since $(x - 1)$ is not compatible with $\bm{c}$ being stable by rotation, the only non zero possibility is $\bm{c} = (1, \cdots, 1)$.

Hereafter, we compute $P[\mathtt{c_1}]$ and $P[\mathtt{c_{unstable}}]$.
The probability that the $\DiffSD$ problem admits another solution than the one of its underlying $\QCSD$ instance where $\bm{c} = (1, \cdots, 1)$ is the same as the probability that the vector $\bm{y}_{\mu_j} - \bm{c}$ has a preimage by $\bm{H}$ of weight $w$ namely $p$.
As $M$ vectors $(\bm{y}_{\mu_j})_{j \in \intoneto{M}}$ can be considered, it follows that $P[\mathtt{c_1}] = M \times p = M \times {n \choose \omega}/2^{(n - k)}$.

In the case where $\bm{c}$ is not stable by rotation, one cannot use cyclicity to find several valid $\DiffSD$ equations from a unique one as in the previous case.
Therefore, to compute the probability that the $\DiffSD$ admits another solution than the one of its underlying $\QCSD$ instance when $\bm{c}$ is unstable by rotation, one has to consider the probability that all the $\Delta$ vectors $\bm{rot}_{\kappa_j}(\bm{y}_{\mu_j}) - \bm{c}$ have a preimage by $\bm{H}$ of weight $w$.
Each pre-image may exist with probability $p$ thus there exists $\Delta$ pre-images with probability $p^{\Delta} = \big({n \choose \omega}/2^{(n - k)}\big)^{\Delta}$.
As $2^{(n - k)} - 2$ possible values can be considered for $\bm{c}$ (all possible values except $\bm{0}$ and $\bm{1}$), it follows that $P[\mathtt{c_{unstable}}] = (2^{(n - k)} - 2) \times \big({n \choose \omega}/2^{(n - k)}\big)^{\Delta}$.


\vspace{\baselineskip}
\noindent \textbf{Theorem 3.}
\emph{If the hash function used is collision-resistant and if the commitment scheme used is binding and hiding, then the protocol depicted in Figure~\ref{fig:pok4-qc} is an honest-verifier zero-knowledge PoK for the $\QCSD(n, k, w, M)$ problem with soundness error equal to $\frac{1}{N} + \frac{(N - 1)(\Delta - 1)}{N \cdot M \cdot k}$ for some parameter $\Delta$.}

\begin{proof} 

The proofs of the correctness and special honest-verifier zero-knowledge properties follow the same arguments as the proofs given in Appendix~\ref{app:pok-sd1}.
Hereafter, we provide a proof for the $(Mk,N)$-special soundness property.

\vspace{\baselineskip}
  \noindent \textbf{$(Mk,N)$-special soundness.}
  To prove the $(Mk,N)$-special soundness, one need to build an efficient knowledge extractor $\Ext$ which returns a solution of the $\QCSD$ instance defined by $(\bm{H}, (\bm{y}_i)_{i \in \intoneto{M}})$ with high probability given a $(Mk,N)$-tree of accepting transcripts.
In our case, we build $\Ext$ as a knowledge extractor for the $\DiffSD$ problem and use it as extractor for the $\QCSD$ problem thanks to Theorem~\ref{thm:reduction-diffsd}.
One only need a subset of the tree of accepting transcripts to complete the proof namely $2\Delta$ leafs corresponding to challenges $\big( \mu_j, \kappa_j, \alpha_i \big)^{j \in \intoneto{\Delta}}_{i \in [1,2]}$.
The knowledge extractor $\Ext$ computes the solution as:

\vspace{0.5\baselineskip} 
\pseudocode{%
  \tx{1. Compute } (\bar{\pi}_i)_{i \in \intoneto{n}} \tx{ from } z_2^{(\mu_1, \kappa_1, \alpha_1)} \tx{ and } z_2^{(\mu_1, \kappa_1, \alpha_2)} \\
  \tx{2. Compute } \bm{c}_1 = \bm{H} \bm{z}^{(\mu_1, \kappa_1)}_1 - \rot_{\kappa_1}(\bm{y}_{\mu_1}) = \cdots = \bm{H} \bm{z}^{(\mu_{\Delta}, \kappa_{\Delta})}_1 - \rot_{\kappa_{\Delta}}(\bm{y}_{\mu_{\Delta}}) \\
  \tx{3. Compute } \bm{c}_2 = \bar{\pi}_{\alpha_1} \circ \cdots \circ \bar{\pi}_1[\bm{z}^{(\mu_1, \kappa_1)}_1] - \bm{z}^{(\mu_1, \kappa_1, \alpha_1)}_4 = \cdots = \bar{\pi}_{\alpha_1} \circ \cdots \circ \bar{\pi}_1[\bm{z}^{(\mu_{\Delta}, \kappa_{\Delta})}_1] \\ \hspace{68pt} - \bm{z}^{(\mu_{\Delta}, \kappa_{\Delta}, \alpha_{\Delta})}_4 \\
  \tx{4. Compute } \bm{c}_3 = \bm{H}(\bar{\pi}^{-1}_{1} \circ \cdots \circ \bar{\pi}^{-1}_{\alpha_1}[\bm{c}_2]) - \bm{c}_1 \\
  \tx{5. Compute } \bm{d}_j = \bar{\pi}^{-1}_{1} \circ \cdots \circ \bar{\pi}^{-1}_{\alpha_1}[\bm{z}^{(\mu_j, \kappa_j, \alpha_j)}_4] \tx{ for all } j \in \intoneto{\Delta} \\
  \tx{6. Output } (\bm{c}_3, (\bm{d}_{j}, \kappa_j, \allowbreak \mu_j)_{j \allowbreak \in \intoneto{\Delta}})
}
\vspace{0.5\baselineskip} 

\noindent One can compute $\big( \bar{\pi}^{(\mu_j, \kappa_j)}_i, \bar{\bm{v}}^{(\mu_j, \kappa_j)}_i \big)^{j \in \intoneto{\Delta}}_{i \in \intoneto{N}}$ from $\big( z_2^{(\mu_j, \kappa_j, \alpha_i)} \big)^{j \in \intoneto{\Delta}}_{i \in [1,2]}$.
  From the binding property of the commitments $(\com_{1,i})_{i \in \intoneto{N}}$, one can see that $(\pi_i, \bm{v}_i)_{i \in \intoneto{N}} \allowbreak = (\bar{\pi}^{(\mu_1, \kappa_1)}_i, \bar{\bm{v}}^{(\mu_1, \kappa_1)}_i)_{i \in \intoneto{N}} = \cdots = (\bar{\pi}^{(\mu_{\Delta}, \kappa_{\Delta})}_i, \bar{\bm{v}}^{(\mu_{\Delta}, \kappa_{\Delta})}_i)_{i \in \intoneto{N}}$.
From the binding property of commitment $\com_2$, one has $\bar{\bm{s}}^{(\mu_j, \kappa_j, \alpha_1)}_0 = \bar{\bm{s}}^{(\mu_j, \kappa_j, \alpha_2)}_0 = \bm{z}^{(\mu_j, \kappa_j)}_1$ for all $j \in \intoneto{\Delta}$.
In addition, one has $\bar{\bm{s}}^{(\mu_j, \kappa_j, \alpha_1)}_i = \bar{\pi}_i[\bar{\bm{s}}^{(\mu_j, \kappa_j, \alpha_1)}_{i - 1}] + \bar{\bm{v}}_i$ for all $i \in \intoneto{N} \setminus \alpha_1$ and all $j \in \intoneto{\Delta}$ as well as $\bar{\bm{s}}^{(\mu_j, \kappa_j, \alpha_2)}_i = \bar{\pi}_i[\bar{\bm{s}}^{(\mu_j, \kappa_j, \alpha_2)}_{i - 1}] + \bar{\bm{v}}_i$ for all $i \in \intoneto{N} \setminus \alpha_2$ and all $j \in \intoneto{\Delta}$.
  Using the binding property of commitment $\com_2$ once again, one can deduce that $\bar{\bm{s}}^{(\mu_j, \kappa_j)}_i = \bar{\pi}_i[\bar{\bm{s}}^{(\mu_j, \kappa_j)}_{i - 1}] + \bar{\bm{v}}_i$ for all $i \in \intoneto{N}$ and $j \in \intoneto{\Delta}$ hence $\bar{\bm{s}}^{(\mu_j, \kappa_j)}_{\alpha_1} = \bar{\pi}_{\alpha_1} \circ \cdots \circ \bar{\pi}_{1}[\bm{z}^{(\mu_j, \kappa_j)}_1] + \bm{v}_{\alpha_1} + \sum\nolimits_{i \in \intoneto{\alpha_1 - 1}} \bar{\pi}_{\alpha_1} \circ \cdots \circ \bar{\pi}_{i + 1}[\bm{v}_i]$ for all $j \in \intoneto{\Delta}$.
  From the binding property of commitment $\com_1$, one has $\bm{c}_1 = \bm{H} \bm{z}^{(\mu_1, \kappa_1)}_1 - \rot_{\kappa_1}(\bm{y}_{\mu_1}) = \cdots = \bm{H} \bm{z}^{(\mu_{\Delta}, \kappa_{\Delta})}_1 - \rot_{\kappa_{\Delta}}(\bm{y}_{\mu_{\Delta}})$.
  In addition, one has $\bar{\bm{r}}^{(\mu_1, \kappa_1)} = \cdots = \bar{\bm{r}}^{(\mu_{\Delta}, \kappa_{\Delta})}$ which implies that $\bar{\bm{t}}_{\alpha_1} = \bar{\bm{t}}^{(\mu_{\Delta}, \kappa_{\Delta})}_{\alpha_1} = \cdots = \bar{\bm{t}}^{(\mu_{\Delta}, \kappa_{\Delta})}_{\alpha_1}$.
From the binding property of commitment $\com_2$, one has $\bar{\bm{s}}^{(\mu_j, \kappa_j)}_{\alpha_1} = \bar{\bm{t}}^{(\mu_j, \kappa_j)}_{\alpha_1} + \bar{\bm{z}}^{(\mu_j, \kappa_j)}_{4}$ for all $j \in \intoneto{\Delta}$.
  Using $\bar{\bm{t}}_{\alpha_1} = \bar{\bm{t}}^{(\mu_1, \kappa_1)}_{\alpha_1} = \cdots = \bar{\bm{t}}^{(\mu_{\Delta}, \kappa_{\Delta})}_{\alpha_1}$, one can deduce that $\bm{c}_2 = \bar{\bm{t}}_{\alpha_1} - \bm{v}_{\alpha_1} - \sum\nolimits_{i \in \intoneto{\alpha_1 - 1}} \bar{\pi}_{\alpha_1} \circ \cdots \circ \bar{\pi}_{i + 1}[\bm{v}_i] = \bar{\pi}_{\alpha_1} \circ \cdots \circ \bar{\pi}_{1}[\bm{z}^{(\mu_1, \kappa_1)}_1] - \bm{z}^{(\mu_1, \kappa_1, \alpha_1)}_{4} = \cdots = \bar{\pi}_{\alpha_1} \circ \cdots \circ \bar{\pi}_{1}[\bm{z}^{(\mu_{\Delta}, \kappa_{\Delta})}_1] - \bm{z}^{(\mu_{\Delta}, \kappa_{\Delta}, \alpha_{\Delta})}_{4}$.
  It follows that $\bm{z}^{(\mu_j, \kappa_j)}_1 = \bar{\pi}^{-1}_{1} \circ \cdots \circ \bar{\pi}^{-1}_{\alpha_1}[\bm{c}_2 + \bm{z}^{(\mu_j, \kappa_j, \alpha_j)}_{4}]$ for all $j \in \intoneto{\Delta}$.
  As $\bm{c}_1 = \bm{H} \bm{z}^{(\mu_j, \kappa_j)}_1 - \rot_{\kappa_j}(\bm{y}_{\mu_j})$, one has $\bm{c}_1 = \bm{H}(\bar{\pi}^{-1}_{1} \circ \cdots \circ \bar{\pi}^{-1}_{\alpha_1}[\bm{c}_2 + \bm{z}^{(\mu_j, \kappa_j, \alpha_j)}_{4}]) - \rot_{\kappa_j}(\bm{y}_{\mu_j})$ hence $\bm{H}(\bar{\pi}^{-1}_{1} \circ \cdots \circ \bar{\pi}^{-1}_{\alpha_1}[\bm{z}^{(\mu_j, \kappa_j, \alpha_j)}_{4}]) + \bm{c}_3 = \rot_{\kappa_j}(\bm{y}_{\mu_j})$ for all $j \in \intoneto{\Delta}$.
  Given that $\hw{\bm{z}^{(\mu_j, \kappa_j, \alpha_j)}_4} = \omega$ for all $j \in \intoneto{\Delta}$, one can conclude that $(\bm{c}_3, (\bm{d}_{j}, \kappa_j, \allowbreak \mu_j)_{j \in \intoneto{\Delta}})$ is a solution of the considered $\DiffSD$ problem instance.
One completes the proof by using Theorem~\ref{thm:reduction-diffsd}.
\end{proof}

\section{Proof of Theorem~\ref{thm:pok-rsd2}} \label{app:pok-rsd2}

\noindent \textbf{Theorem 4.}
\emph{If the hash function used is collision-resistant and if the commitment scheme used is binding and hiding, then the protocol depicted in Figure~\ref{fig:pok-rsd2} is an honest-verifier zero-knowledge PoK for the $\IRSL$ problem with soundness error equal to $\frac{1}{N} + \frac{(N - 1)(\Delta - 1)}{N (q^{Mk} - 1)}$ for some parameter $\Delta$.}

\begin{proof} 

  The proof of our protocol in the rank metric (Theorem \ref{thm:pok-rsd2}) is similar to the proof of our protocol in Hamming metric (Theorem \ref{thm:pok-sd2-qc}) presented in Appendix~\ref{app:pok-sd2-qc}.
It relies on the introduction of the intermediary $\DiffIRSL$ problem (Definition \ref{def:diffrsl}) along with a reduction from the $\IRSL$ problem to the $\DiffIRSL$ problem (Theorem~\ref{thm:reduction-diffrsl}).

\end{proof}

\begin{definition}[$\DiffIRSL$ problem] \label{def:diffrsl}
  Let $(q, m, n = 2k, k, w, M, \Delta)$ be positive integers, $P \in \Fq[X]$ be an irreducible polynomial of degree $k$, $\bm{H} \in \mathcal{ID}(\Fqm^{\nmktn})$ be a random parity-check matrix of an ideal code of index $2$, $E$ be a random subspace of $\Fqm$ of dimension $\omega$, $(\bm{x}_i)_{i \in \intoneto{M}} \in (\Fqm^n)^M$ be random vectors such that $\rsup{\bm{x}_i} = E$ and $(\bm{y}_i)_{i \in \intoneto{M}} \in (\Fqm^{(\nmk)})^M$ be vectors such that $\bm{H} \bm{x}_i^\top = \bm{y}_i^\top$.
  Given $(\bm{H}, (\bm{y}_i)_{i \in \intoneto{M}})$, the Differential Ideal Rank Support Learning problem $\IRSL(q, m, n, k, \allowbreak w, M, \Delta)$ asks to find $(\bm{c}, (\bm{d}_{\delta}, (\gamma_{i,j}^{\delta})_{i \in \intoneto{M}, j \in \intoneto{k}})_{\delta \in \intoneto{\Delta}}) \in \Ft^{(n - k)} \times (\Ft^{n} \times (\Fq^{Mk} \setminus (0, \cdots, 0))^{\Delta})$ such that $\bm{H} \bm{d}_{\delta}^\top + \bm{c} = \sum\nolimits_{(i, j) \in \intoneto{M} \times \intoneto{k}} \gamma^{\delta}_{i, j} \cdot \bm{rot}_{j}(\bm{y}_{i}^{\top})$ with $\rsup{\bm{d}_{\delta}} = F$ and $|F| = \omega$ for each $\delta \in \intoneto{\Delta}$.
\end{definition}

\begin{theorem} \label{thm:reduction-diffrsl}
  If there exists a $\ppt$ algorithm solving the $\DiffIRSL(q, m, n, k, \allowbreak w, M, \Delta)$ with probability $\epsilon_{\DiffIRSL}$, then there exists a $\ppt$ algorithm solving the $\IRSL(q, m, n, \allowbreak k, w, M)$ problem with probability $\epsilon_{\IRSL} \ge (1 - (q^{m(n - k)} - 1) \times \allowbreak (q^{\omega (m - \omega) + n\omega - m(n - k)})^{\Delta}) \cdot \epsilon_{\DiffIRSL}$.
\end{theorem}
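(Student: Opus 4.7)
The plan is to mirror the strategy of Theorem~\ref{thm:reduction-diffsd}. I would construct an $\IRSL$ solver $\mathcal{A}_{\IRSL}$ that forwards its input $(\bm{H}, (\bm{y}_i)_{i \in \intoneto{M}})$ directly to $\mathcal{A}_{\DiffIRSL}$, receives a candidate $(\bm{c}, (\bm{d}_{\delta}, (\gamma^{\delta}_{i,j}))_{\delta \in \intoneto{\Delta}})$, and outputs $\rsup{\bm{d}_1}$ when $\bm{c} = \bm{0}$ (and $\bot$ otherwise). The stated bound then splits into a correctness part (when $\bm{c} = \bm{0}$ the extractor recovers $E$) and a counting part (bounding the probability that a spurious $\bm{c} \ne \bm{0}$ solution exists in the underlying $\DiffIRSL$ instance).

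For correctness, when $\bm{c} = \bm{0}$ the $\DiffIRSL$ relation reads $\bm{H}\bm{d}_{\delta}^{\top} = \sum_{i,j} \gamma^{\delta}_{i,j}\,\rot_j(\bm{y}_i^{\top}) = \bm{H}\bigl(\sum_{i,j} \gamma^{\delta}_{i,j}\,\rot_j(\bm{x}_i)\bigr)^{\top}$. Because every $\bm{x}_i$ has rank support $E$ of dimension $\omega$, any $\Fq$-linear combination of the $\rot_j(\bm{x}_i)$ has support contained in $E$ and rank weight at most $\omega$. For a random ideal parity-check matrix $\bm{H}$ below the Gilbert--Varshamov-like bound, distinct vectors of rank weight at most $\omega$ produce distinct syndromes with overwhelming probability; hence $\bm{d}_{\delta}$ coincides with the combination itself, and the condition $|\rsup{\bm{d}_{\delta}}| = \omega$ forces $\rsup{\bm{d}_{\delta}} = E$, so $\mathcal{A}_{\IRSL}$ outputs the secret support.

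For the counting step, I would use $p_R = q^{\omega(m - \omega) + n\omega - m(n - k)}$ as the probability that a uniformly random syndrome in $\Fqm^{n-k}$ admits some rank-$\omega$ preimage, via the classical count $\approx q^{\omega(m-\omega) + n\omega}$ of rank-$\omega$ vectors in $\Fqm^{n}$ divided by $q^{m(n-k)}$ total syndromes. Fixing a nonzero $\bm{c}$ and arbitrary coefficient vectors $(\gamma^{\delta}_{i,j})_{\delta}$, the probability that all $\Delta$ shifted syndromes $\sum_{i,j}\gamma^{\delta}_{i,j}\rot_j(\bm{y}_i^{\top}) - \bm{c}$ simultaneously admit rank-$\omega$ preimages is at most $p_R^{\Delta}$ under the standard uniform-syndrome heuristic. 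Unlike the Hamming quasi-cyclic setting, irreducibility of $P$ makes $x-1$ invertible in $\Fq[x]/(P)$, so no nonzero $\bm{c}$ is stable under ideal rotations and every nonzero $\bm{c}$ is treated by this single $p_R^{\Delta}$ bound. Summing over the $q^{m(n-k)} - 1$ nonzero values of $\bm{c}$ gives the spurious-solution bound $(q^{m(n-k)}-1) \cdot p_R^{\Delta}$, and the factorisation used in Theorem~\ref{thm:reduction-diffsd} yields $\epsilon_{\IRSL} \ge (1 - (q^{m(n-k)}-1)\, p_R^{\Delta}) \cdot \epsilon_{\DiffIRSL}$.

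The main obstacle I anticipate is handling the \emph{shared} support constraint $\rsup{\bm{d}_{\delta}} = F$ across all $\delta$: the bound $p_R^{\Delta}$ treats the $\Delta$ preimage events independently, whereas the $\DiffIRSL$ definition couples them through a common $F$ of dimension exactly $\omega$. One must argue explicitly that dropping this coupling only inflates the probability estimate, so the upper bound on the bad event still holds. A secondary subtlety is the $\omega' < \omega$ case flagged in Section~\ref{sec:amplification}: if a combination of the $\bm{x}_i$ collapses in rank, the extractor may recover $F \subsetneq E$, and a small adjustment (for instance, taking the sum of several $\rsup{\bm{d}_{\delta}}$, or choosing parameters so degeneracy happens with negligible probability) is needed to recover $E$ with overwhelming probability.
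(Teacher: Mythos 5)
Your proposal matches the paper's proof of Theorem~\ref{thm:reduction-diffrsl} essentially step for step: forward the $\IRSL$ instance to the $\DiffIRSL$ solver, extract $E$ from a $\bm{c}=\bm{0}$ solution, bound the spurious-solution probability by $(q^{m(n-k)}-1)\,p_R^{\Delta}$ using the same heuristic preimage probability $p_R = q^{\omega(m-\omega)+n\omega-m(n-k)}$, and observe that irreducibility of $P$ eliminates the rotation-stable case that arises in the Hamming/quasi-cyclic setting. The two subtleties you flag at the end (that the common-support coupling only shrinks the bad event, and the $\omega'<\omega$ degeneracy) are real and are glossed over by the paper's sketch, so your discussion is, if anything, slightly more careful than the original.
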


\noindent \textit{Sketch of proof.}
The proof of Theorem \ref{thm:reduction-diffrsl} in the rank metric setting is similar to the proof of Theorem \ref{thm:reduction-diffsd} in the Hamming metric setting.
A noticeable difference in the rank metric setting is related to the use of the irreducible polynomial $P \in \Fq[X]$ of degree $k$.
Indeed, the later implies that $P[\mathtt{c_{stable}}] = 0$ namely there is no solution where both $\bm{c}$ is stable by rotation and $\bm{c} \neq (0, \cdots, 0)$.

Given an $[n, k]$ ideal code $\mathcal{C}$, we restrict our analysis to the case where the weight $\omega$ is lower than the rank Gilber-Varshamov bound associated to $\mathcal{C}$ \textit{i.e.} the value for which the number of words of weight less or equal $w$ corresponds to the number of syndromes.
As a consequence, given a syndrome $\bm{y}$, the probability that there exists a pre-image $\bm{x}$ of $\bm{y}$ such that $\bm{Hx}^\top = \bm{y}^\top$ and $\rw{\bm{x}} = \omega$ is $q^{\omega(m - \omega) + n \omega - m(n - k)}$ where $q^{\omega(m - \omega)}$ is an approximation of the Gaussian binomial which counts the number of vector spaces of dimension $\omega$ in $\Fqm$, $q^{n\omega}$ is the number of words in a basis of dimension $\omega$ and $q^{m(n - k)}$ is the number of syndromes.
As such, this probability describes the number of codewords of rank weight $\omega$ divided by the number of possible syndromes.
Following the same steps than the proof of Theorem~\ref{thm:reduction-diffsd} (with $\sum\nolimits_{(i, j) \in \intoneto{M} \times \intoneto{k}} \gamma^{\delta}_{i, j} \cdot \bm{rot}_{j}(\bm{y}_{i}^{\top}) - \bm{c}$ playing the role of $\bm{rot}_{\kappa_j}(\bm{y}_{\mu_j}) - \bm{c}$) and taking into account that $P[\mathtt{c_{stable}}] = 0$, one get $\epsilon_{\IRSL} \ge (1 - (q^{m(n - k)} - 1) \times (q^{\omega (m - \omega) + n\omega - m(n - k)})^{\Delta}) \cdot \epsilon_{\DiffIRSL}$.

\section{PoK leveraging structure related to the $\MQ$ problem} \label{app:extra1}

\begin{center}
  \resizebox{1\textwidth}{!}{
    \pseudocode{%
      \hspace{440pt} \\[-2\baselineskip][\hline]\\[-8pt]
      \underline{\pcalgostyle{Inputs~\&~Public~Data}} \\
      w = (\bm{x}_i)_{i \in \intoneto M}, ~ x = (\mqF, (\bm{y}_i)_{i \in \intoneto{M}}), ~
      \chsps = \intoneto{M} \times \Fq^*, ~ \chs = (\mu, \kappa) \\[\baselineskip]
      \underline{\prover_1(w, x)} \\
      \theta \sampler \bit^{\lambda} \\
      \pcfor i \in \intoneto{N - 1} \pcdo \\
      \pcind \theta_{i} \samples{\theta} \bit^{\lambda}, ~ \phi_{i} \samples{\theta_i} \bit^{\lambda},
      ~ \bm{u}_i \samples{\phi_i} \Fq^n, ~ \bm{v}_i \samples{\phi_i} \Fq^m,
      ~ r_{1, i} \samples{\theta_i} \bit^{\lambda}, ~ \com_{1, i} = \commith{r_{1,i}, \, \phi_i} \\
      \pcend \\
      \theta_{N} \samples{\theta} \bit^{\lambda}, ~ \phi_{N} \samples{\theta_N} \bit^{\lambda}, ~ \bm{u}_N \samples{\phi_N} \Fq^n,
      ~ \bm{u}  =  \sum\nolimits_{i \in \intoneto{N}} \bm{u}_i, ~ \bm{v}_N = \mqF(\bm{u}) - \sum\nolimits_{i \in \intoneto{N - 1}} \bm{v}_i \\
      r_{1, N} \samples{\theta_N} \bit^{\lambda}, ~ \com_{1, N} = \commith{r_{1,N}, \, \bm{v}_N \, || \, \phi_N} \\
      \com_1 = \commitb{(\com_{1,i})_{i \in \intoneto{N}}} \\[\baselineskip]
      \underline{\prover_2(w, x, \mu, \kappa)} \\
      \bm{s}_0 = \kappa \cdot \bm{x}_{\mu} - \bm{u} \\
      \pcfor i \in [1, N] \pcdo \\
      \pcind \bm{s}_i = \mqG(\bm{u}_i, \bm{s}_0) + \bm{v}_i \\
      \pcend \\
      \com_2 = \commitb{\bm{s}_0 \, || \, (\bm{s}_i)_{i \in \intoneto{N}}} \\[\baselineskip]
      \underline{\prover_3(w, x, \mu, \kappa, \alpha)} \\
      \bm{z}_1 = \kappa \cdot \bm{x}_{\mu} - \bm{u} \\
      \pcif \alpha \neq N \pcdo \\
      \pcind z_2 = \bm{v}_N \, || \, (\theta_{i})_{i \in \intoneto{N} \setminus \alpha} \\
      \pcelse \\
      \pcind z_2 = (\theta_{i})_{i \in \intoneto{N} \setminus \alpha} \\
      \pcend \\
      \rsp = (\bm{z}_1, z_2, \com_{1,\alpha}) \\[\baselineskip]
      \underline{\verifier(x, \com_1, (\mu, \kappa), \com_2, \alpha, \rsp)} \\
      \tx{Compute } (\bar{\phi_i}, \bar{r}_{1,i}, \bar{\bm{u}}_i, \bar{\bm{v}}_i)_{i \in \intoneto{N} \setminus \alpha} \tx{ from } z_2 \\
      \bar{\bm{s}}_0 = \bm{z}_1 \\
      \pcfor i \in [1, N] \setminus \alpha \pcdo \\
      \pcind \bar{\bm{s}}_i = \mqG(\bar{\bm{u}}_i, \bar{\bm{s}}_0) + \bar{\bm{v}}_i \\
      \pcend \\
      \bar{\bm{s}}_{\alpha} = \kappa^2 \cdot \bm{y}_{\mu} - \mqF(\bm{z}_1) - \sum\nolimits_{i \in \intoneto{N} \setminus \alpha} \bar{\bm{s}}_i, ~ \bar{\com}_{1, \alpha} = \com_{1, \alpha} \\
      \pcfor i \in [1, N] \setminus \alpha \pcdo \\
      \pcind \pcif i \neq N \pcdo \\
      \pcind \pcind \bar{\com}_{1,i} = \commith{\bar{r}_{1,i}, \, \bar{\phi}_i} \\
      \pcind \pcelse \\
      \pcind \pcind \bar{\com}_{1,N} = \commith{\bar{r}_{1,N}, \, \bar{\bm{v}}_N \, || \, \bar{\phi}_N} \\
      \pcind \pcend \\
      \pcend \\
      b_1 \samplen \big( \com_{1} = \commitb{(\com_{1,i})_{i \in \intoneto{N}}} \big), ~
      b_2 \samplen \big( \com_2 = \commitb{\bm{z}_1 \, || \, (\bar{\bm{s}}_i)_{i \in \intoneto{N}}} \big) \\ 
      \pcreturn b_1 \wedge b_2 
      ~\\[3pt][\hline]\\[-2\baselineskip]
    }
  }
  \captionof{figure}{PoK leveraging structure for the $\HMQP$ problem \label{fig:mq}}
\end{center}

\section{PoK leveraging structure related to $\SD$ over $\Fq$} \label{app:extra2}

  \begin{center}
    \resizebox{1\textwidth}{!}{
      \pseudocode{%
        \hspace{440pt} \\[-2\baselineskip][\hline]\\[-8pt]
        \underline{\pcalgostyle{Inputs~\&~Public~Data}} \\
        w = \bm{x}, ~ x = (\bm{H}, \bm{y}) \\
        \chsps = \Fq^*, ~ \chs = \kappa \\[\baselineskip]
        \underline{\prover_1(w, x)} \\
        \theta \sampler \bit^{\lambda}, ~ \xi \sampler \bit^{\lambda} \\
        \pcfor i \in \intoneto{N} \pcdo \\
        \pcind \theta_{i} \samples{\theta} \bit^{\lambda}, ~ \phi_{i} \samples{\theta_i} \bit^{\lambda},
        ~, \pi_i \samples{\phi_i} \hperm{n}, ~ \bm{v}_i \samples{\phi_i} \Fq^n,
        ~, r_{1, i} \samples{\theta_i} \bit^{\lambda}, ~ \com_{1, i} = \commith{r_{1,i}, \, \phi_{i}} \\
        \pcend \\
        \pi = \pi_N \circ \cdots \circ \pi_1,
        ~ \bm{v} = \bm{v}_N + \sum\nolimits_{i \in \intoneto{N - 1}} \pi_N \circ \cdots \circ \pi_{i + 1}[\bm{v}_i],
        ~ \bm{r} \samples{\xi} \Fq^n, ~ \bm{u} = \pi^{-1}[\bm{r} - \bm{v}] \\
        \com_1 = \commitb{\bm{H} \bm{u} \, || \, \pi[\bm{u}]+ \bm{v} \, || \, \pi[\bm{x}] \, || \, (\com_{1,i})_{i \in \intoneto{N}}} \\[\baselineskip] 
        \underline{\prover_2(w, x, \kappa)} \\
        \bm{s}_0 = \bm{u} + \kappa \cdot \bm{x} \\
        \pcfor i \in [1, N] \pcdo \\
        \pcind \bm{s}_i = \pi_i[\bm{s}_{i - 1}] + \bm{v}_i \\ 
        \pcend \\
        \com_2 = \commitb{\bm{u} + \kappa \cdot \bm{x} \, || \, (\bm{s}_i)_{i \in \intoneto{N}}} \\[\baselineskip]
        \underline{\prover_3(w, x, \kappa, \alpha)} \\
        \bm{z}_1 = \bm{u} + \kappa \cdot \bm{x}, ~ z_2 = (\theta_{i})_{i \in \intoneto{N} \setminus \alpha},
        ~ z_3 = \xi, ~ \bm{z}_4 = \pi[\bm{x}] \\
        \rsp = (\bm{z}_1, z_2, z_3, \bm{z}_4, \com_{1,\alpha}) \\[\baselineskip]
        \underline{\verifier(x, \com_1, \kappa, \com_2, \alpha, \rsp)} \\
        \tx{Compute } (\bar{\phi_i}, \bar{r}_{1,i}, \bar{\pi}_i, \bar{\bm{v}}_i)_{i \in \intoneto{N} \setminus \alpha} \tx{ from } z_2
        \tx{ and } \bar{\bm{r}} \tx{ from } z_3 \\
        \bar{\bm{t}}_N = \bar{\bm{r}}, ~ \bar{\bm{b}}_N = \bm{z}_4 \\
        \pcfor i \in \{N, \ldots, \alpha + 1\} \pcdo \\
        \pcind \bar{\bm{t}}_{i-1} = \bar{\pi}^{-1}_i[\bar{\bm{t}}_i - \bar{\bm{v}}_i], ~ \bar{\bm{b}}_{i-1} = \bar{\pi}^{-1}_i[\bar{\bm{b}}_i] \\
        \pcend \\
        \bar{\bm{s}}_0 = \bm{z}_1, ~ \bar{\bm{s}}_{\alpha} = \bar{\bm{t}}_\alpha + \kappa \cdot \bar{\bm{b}}_{\alpha}, ~ \bar{\com}_{1, \alpha} = \com_{1, \alpha} \\
        \pcfor i \in [1, N] \setminus \alpha \pcdo \\
        \pcind \bar{\bm{s}}_i = \bar{\pi}_i[\bar{\bm{s}}_{i - 1}] + \bar{\bm{v}}_i, 
        ~ \bar{\com}_{1,i} = \commith{\bar{r}_{1,i}, \, \bar{\phi_i}} \\
        \pcend \\
        b_1 \samplen \big( \com_{1} = \commitb{\bm{H} \bm{z}_1 - \kappa \cdot \bm{y} \, || \, \bar{\bm{r}} \, || \, \bm{z}_4 \, || \, (\bar{\com}_{1,i})_{i \in \intoneto{N}}} \big) \\
        b_2 \samplen \big( \com_2 = \commitb{\bm{z}_1 \, || \, (\bar{\bm{s}}_i)_{i \in \intoneto{N}}} \big) \\ 
        b_3 \samplen \big( \hw{\bm{z}_4} = \omega \big) \\
        \pcreturn b_1 \wedge b_2 \wedge b_3
      ~\\[3pt][\hline]\\[-\baselineskip]
      }
    }
    \captionof{figure}{PoK leveraging structure for the $\SD$ problem over $\Fq$ \label{fig:pok4-fq}}
  \end{center}
  \vspace{\baselineskip}




\end{document}